\def\R{\mathbb{R}}
\def\eps{\varepsilon}
\def\D{{\mathcal D}}
\def\E{{\mathbb E}}
\def\F{{\mathcal F}}
\def\H{{\mathcal H}}
\def\P{{\mathcal P}}
\def\X{{\mathcal X}}
\def\Y{{\mathcal Y}}
\def\M{{\mathcal M}}
\def\Z{{\mathcal Z}}
\def\U{{\mathcal U}}
\def\V{{\mathcal V}}
\def\sBer{{\mathsf{Bernoulli}}}
\def\sP  {{\mathsf{P}}}
\def \cP {\mathsf{P}_{\mathsf{c}}}
\newcommand{\repdc}[3]{#1_{#2} , \ldots , #1_{#3}}
\newcounter{example}
\newtheorem{definition}{Definition}
\newtheorem{theorem}{Theorem}
\newtheorem{corollary}{Corollary}
\newtheorem{proposition}{Proposition}
\newtheorem{lemma}{Lemma}
\theoremstyle{remark}
\newcommand{\markov}{\mathrel\multimap\joinrel\mathrel-%
\mspace{-9mu}\joinrel\mathrel-}
\tikzstyle{RectObject}=[rectangle,fill=white,draw,line width=0.2mm]
\tikzstyle{line}=[draw]
\tikzstyle{arrow}=[draw, -latex]
\DeclareFontFamily{U}{BOONDOX-calo}{\skewchar\font=45 }
\DeclareFontShape{U}{BOONDOX-calo}{m}{n}{
	<-> s*[1.05] BOONDOX-r-calo}{}
\DeclareFontShape{U}{BOONDOX-calo}{b}{n}{
	<-> s*[1.05] BOONDOX-b-calo}{}
\DeclareMathAlphabet{\mathcalboondox}{U}{BOONDOX-calo}{m}{n}
\SetMathAlphabet{\mathcalboondox}{bold}{U}{BOONDOX-calo}{b}{n}
\DeclareMathAlphabet{\mathbcalboondox}{U}{BOONDOX-calo}{b}{n}
\begin{document}

\title{\vspace{5.5mm}{Privacy-Aware Guessing Efficiency}}
%\author{\IEEEauthorblockN{Shahab Asoodeh}
\author{\IEEEauthorblockN{Shahab Asoodeh\thanks{This work was supported in part by NSERC of Canada.}, Mario Diaz, Fady Alajaji, and Tam\'{a}s Linder\thanks{The authors are with the Department of Mathematics and Statistics, Queen's University, Canada. Emails: \{asoodehshahab, fady, linder\}@mast.queensu.ca, 13madt@queensu.ca.}}}
  % \IEEEauthorblockA{\normalsize{Department of Mathematics and Statistics, Queen's University, Canada}
    %\\ asoodehshahab@mast.queensu.ca  }}
 %   \\\{asoodehshahab, fady, linder\}@mast.queensu.ca, 13madt@queensu.ca}}
%\restoregeometry
\maketitle

\begin{abstract}
We investigate the problem of guessing a discrete random variable $Y$ under a privacy constraint dictated by another correlated discrete random variable $X$, where both guessing efficiency and privacy are assessed in terms of the probability of correct guessing. We define $\mathcalboondox{h}(P_{XY}, \eps)$ as the maximum probability of correctly guessing $Y$ given an auxiliary random variable $Z$, where the maximization is taken over all $P_{Z|Y}$ ensuring that the probability of correctly guessing $X$ given $Z$ does not exceed $\eps$. We show that the map $\eps\mapsto \mathcalboondox{h}(P_{XY}, \eps)$ is strictly increasing, concave, and piecewise linear, which allows us to derive a closed form expression for $\mathcalboondox{h}(P_{XY}, \eps)$ when $X$ and $Y$ are connected via a binary-input binary-output channel. For $\{(X_i, Y_i)\}_{i=1}^n$ being pairs of independent and identically distributed binary random vectors, we similarly define $\underbar{\emph{h}}_n(P_{X^nY^n}, \eps)$ under the assumption that $Z^n$ is also a binary vector. Then we obtain a closed form expression  for $\underbar{\emph{h}}_n(P_{X^nY^n}, \eps)$ for sufficiently large, but nontrivial values of $\eps$.
\end{abstract}
%\begin{IEEEkeywords}
%Data privacy, guessing probability, R\'{e}nyi entropy, Arimoto's mutual information.
%\end{IEEEkeywords}
%\break
%{\hypersetup{linkbordercolor=black}
%\tableofcontents}

\section{Introduction and Preliminaries}
Given private information, represented by a random variable $X$, non-private observable information, say $Y$, is generated via a fixed channel $P_{Y|X}$. Consider two communicating agents Alice and Bob, where Alice observes $Y$ and wishes to disclose it to Bob as accurately as possible in order to receive a payoff, but in such a way that $X$ is kept almost private from him.  Given the joint distribution $P_{XY}$, Alice chooses a random mapping $P_{Z|Y}$, a so-called privacy filter, to generate a new random variable $Z$, called the \emph{displayed data}, such that Bob can \emph{guess} $Y$ from $Z$ with as small error probability as possible while $Z$ cannot be used to efficiently guess $X$.

The tradeoff between utility and privacy was addressed from an information-theoretic viewpoint in \cite{yamamotoequivocationdistortion,Asoode_submitted,Asoodeh_Allerton,Calmon_fundamental-Limit,Funnel}, where both utility and privacy were measured in terms of information-theoretic quantities. In particular, in \cite{Asoode_submitted} both utility and privacy were measured in terms of the mutual information $I$. Specifically, the so-called \emph{rate-privacy function} $g(P_{XY}, \eps)$ was defined as the maximum of $I(Y;Z)$ over all $P_{Z|Y}$ such that $I(X;Z)\leq \eps$. In the most stringent privacy setting $\eps=0$, called \emph{perfect privacy}, it was shown that $g(P_{XY}, 0)>0$ if and only if $X$ is weakly independent of $Y$, that is, if the set of
vectors $\{P_{X|Y}(\cdot|y):y\in \Y\}$ is linearly dependent. In \cite{Calmon_fundamental-Limit}, an equivalent result was obtained  in terms of the singular values of the operator $f\mapsto \E[f(X)|Y]$.
%Yamamoto considered an information-theoretic setting in which privacy is measured via the conditional entropy of the private data given the displayed data or the \emph{equivocation} $H(X|Z)$, and defined the rate-distortion-equivocation function as the corresponding tradeoff between a distortion-based utility and privacy. Sankar et al.\ \cite{Lalitha_Forensics} studied a similar problem in the presence of side information. Calmon et al.\ \cite{Calmon_bounds_Inference} considered the case where $X=Y$ and proposed a novel privacy measure (which includes maximal correlation and chi-square correlation) between $X$ and $Z$ and studied the optimal privacy filter which minimizes the error probability $\Pr(\hat{X}(Z)\neq X)$ for any estimator $\hat{X}:Z\to X$. Calmon and Fawaz \cite{Calmon_privacy_Aganist} and Monedero et al.\ \cite{t_closeness} also independently used distortion and mutual information for utility and privacy, respectively, to define a privacy-distortion function which resembles the classical rate-distortion function.
Although a connection between this information-theoretic privacy measure and a coding theorem is established in \cite{Asoode_submitted} and \cite{El_Gamal_Gray_Wyner}, the use of mutual information as a privacy measure is not satisfactorily motivated in an \emph{operational} sense. To find a measure of privacy with a clear operational meaning, in this paper we take an estimation-theoretic approach and define both privacy and utility measures in terms of the probability of guessing correctly. 

Given discrete random variables $U\in \U$ and $V\in \V$, the probability of correctly guessing $U$ given $V$ is defined as
\begin{equation*}
	%% \nonumber to remove numbering (before each equation)
	 \cP(U|V)\coloneqq\max_{g}\Pr(U=g(V))=\sum_{v\in \V}\max_{u\in \U}\ P_{UV}(u, v),
	%  &=& \sum_{v\in \V}P_{V}(v)\max_{u\in \U}\{P_{U|V}(u|v)\},
	%  %\\
	%  %=\sum_{v\in \V}\max_{u\in \U}\{P_{U}(u)P_{V|U}(v|u)\}.
	\end{equation*}
% \begin{eqnarray*}
%% \nonumber to remove numbering (before each equation)
%  \cP(U|V)&\coloneqq& \max_{g}\Pr(U=g(V))\\
%  &=& \sum_{v\in \V}P_{V}(v)\max_{u\in \U}\{P_{U|V}(u|v)\},
%  %\\
%  %=\sum_{v\in \V}\max_{u\in \U}\{P_{U}(u)P_{V|U}(v|u)\}.
%\end{eqnarray*}
where the first maximum is taken over all functions ${g:\V~\to \U}$.
% where $\B(\R)$ denotes the collection of all Borel measurable\footnote{As pointed out in \cite{Counterexample_WISE}, we need to restrict the minimization to the collection of Borel measurable estimators $g$. It is possible to construct a nonmeasurable transformation $\hat{g}$ yielding a random variable $\hat{g}(V)$ which is equal to $U$ pointwise but $\mmse(U|V)=\var(U)>0$.} functions on the real line and $\var(\cdot)$ denotes the variance of a random variable.
It is easy to show that $\cP$ satisfies the data processing inequality, i.e., $\cP(U|W)\leq \cP(U|V)$ for $U$, $V$ and $W$ which form the Markov chain $U\markov V\markov W$. Thus, we measure privacy in terms of $\cP(X|Z)$ which quantifies the advantage of an adversary observing $Z$ in guessing $X$ in a single shot attempt.

A similar operational measure of privacy was recently proposed in \cite{Issa_Sibson}, where $P_{Z|X}$ is said to be $\eps$-private if $\log\frac{\cP(U|Z)}{\cP(U)}\leq \eps$ for \emph{all} auxiliary random variables $U$ satisfying $U\markov X\markov Z$. This requirement guarantees that no \emph{randomized} function of $X$ can be efficiently estimated from $Z$, which leads to a strong privacy guarantee. In \cite{Fawaz_Makhdoumi}, maximal correlation \cite{gebelien} was proposed as another measure of privacy.
%(e.g., see \cite{Asoode_submitted,Fawaz_Makhdoumi,Calmon_bounds_Inference})
Operational interpretations corresponding to this privacy measure are given in  \cite{Calmon_bounds_Inference} for the discrete case and  in \cite{Asoode_MMSE_submitted} for a continuous setup.

To quantify the conflict between utility and privacy, we define the \emph{privacy-aware guessing function} $\mathcalboondox{h}$ as
\begin{equation}\label{Def_h_eps}
  \mathcalboondox{h}(P_{XY}, \eps)\coloneqq\sup_{P_{Z|Y}:X\markov Y\markov Z, \atop \cP(X|Z)\leq \eps}\cP(Y|Z).
\end{equation}
Due to the data processing inequality, we can restrict the privacy threshold $\eps$ to the interval $[\cP(X), \cP(X|Y)]$, where $\cP(X)$ is the probability of correctly guessing $X$ in the absence of any side information. For $\eps$ close to $\cP(X)$, the privacy guarantee $\cP(X|Z)\leq \eps$ intuitively means that it is nearly as hard to guess $X$ observing $Z$ as it is without observing $Z$.

We derive functional properties of the map $\eps\mapsto \mathcalboondox{h}(P_{XY}, \eps)$. In particular, we show that it is strictly increasing, concave, and piecewise linear. Piecewise linearity (Theorem~\ref{Thm:PiecewiseLinearity}), which is the most important and technically difficult result in the paper, allows us to derive a tight upper bound on $\mathcalboondox{h}(P_{XY}, \eps)$ for general $P_{XY}$. As a consequence of concavity, we derive a closed form expression for $\mathcalboondox{h}(P_{XY}, \eps)$ for any $\eps\in [\cP(X), \cP(X|Y)]$ when $X$ and $Y$ are both binary. It is shown (Theorem~\ref{Theorem_Linearity_BIBO}) that either the Z-channel or the \textit{reverse} Z-channel achieves $\mathcalboondox{h}(P_{XY}, \eps)$ in this case depending on the backward channel.
%, i.e.,  $Y=0$ (or $Y=1$) can be transmitted perfectly with probability one and  $Y=1$ (or $Y=0$) needs to be flipped with probability $\zeta(\eps)$ (or $\tilde{\zeta}(\eps)$), where  $\zeta(\eps)$ and $\tilde{\zeta}(\eps)$ are given in Theorem~\ref{Theorem_Linearity_BIBO}.   

%The optimal privacy mechanism is as follows. Suppose $X$ represent Alice's medical status (i.e., $X=1$ corresponds to Alice not having a particular disease). Now assume in a survey she is asked about her smoking status, represented by $Y$, i.e., $Y=1$ corresponds to not smoking. The channel from $X$ to $Y$ might be asymmetric. In order for her to preserve privacy with respect to her medical status and yet maximally reveal her smoking status, she must tell the truth when $Y=0$ and lie when $Y=1$ with probability $\zeta(\eps)$, where $\zeta(\eps)$ is given in Theorem~\ref{Theorem_Linearity_BIBO}.

We also consider the vector case for a pair of binary random vectors $(X^n, Y^n)$ under an additional constraint that $Z^n$ is a binary random vector. Here,  $Z^n$ is revealed publicly and the goal is to guess $Y^n$ under the privacy constraint $\cP(X^n|Z^n)\leq \eps^n$. This model can be viewed as a privacy-constrained version of the \emph{correlation distillation} problem studied in \cite{Witsenhausen:dependent}. Suppose Alice  and Bob respectively observe $Y^n$ and $Z^n$, where $\{(Y_i, Z_i)\}_{i=1}^n$ is independent and identically distributed (i.i.d.) according to the joint distribution $P_{YZ}$, and assume that they are to design non-constant Boolean functions $f$ and $g$ such that $\Pr(f(Y^n)=g(Z^n))$ is maximized. A dimension-free upper bound for this probability was given in \cite{Witsenhausen:dependent}. Now suppose $P_{YZ}$ is not given and Alice is to design $P_{Z|Y}$ (for a fixed $\Y$-marginal) that maximizes $\cP(f(Y^n)|Z^n)$ for a given function $f$ while $\cP(X^n|Z^n)\leq \eps^n$. We show (Theorem~\ref{Thm:Difference}) that if $\{(X_i, Y_i)\}_{i=1}^n$ is i.i.d. according to $P_{XY}$ with $|\X|=|\Y|=2$ and $P_{Y|X}$ is a binary symmetric channel, then the maximum of $\cP(Y^n|Z^n)$ under the privacy constraint $\cP(X^n|Z^n)\leq \eps^n$ admits a closed form expression for sufficiently large but nontrivial $\eps$. This then provides a lower bound for the privacy-constrained correlation distillation problem due to the trivial fact that $\cP(f(Y^n)|Z^n)\geq \cP(Y^n|Z^n)$ for any function $f$.

%To simplify the analysis, we make the assumption $|\Z|=|\Y|$. We derive a closed form expression for $\mathcalboondox{h}(P_{X^nY^n}, \eps)$ for sufficiently large $\eps$  in Theorem~\ref{Thm:Diference_Gap} when 

% and ii) $X^n$ comprises the first $n$ samples of a homogenous Markov process of order 1 having a simple symmetric transition matrix. To motivate these settings, we study the privacy-constrained version of \emph{correlation distillation} problem studied in \cite{Witsenhausen:dependent} and also reverse version of \emph{privacy aware learning} problem studied in \cite{privacyaware}.

%This measure provides an interesting privacy interpretation for both even if $X$ and $Z$ are continuous random variables: it is shown in that $\cP(X|Z)-\cP(X)\leq \rho_m^2(X;Z)\sqrt{1-\sum_{x}P_X^2(x)}$, and also in that $\inf_{f}\mmse(f(X)|Z)=(1-\rho_m^2(X;Z))\var(f(X))$ where $\mmse(f(X)|Z)$ is the minimum mean-squared error of estimating $f(X)$ given $Z$ and $\var(f(X))$ is the variance of $f(X)$.
We omit the proof of most of the results due to space limitations. The proofs are available in \cite{Asoode_H_EPSILON}.

\section{Scalar Case}
Suppose $X$ and $Y$ are discrete random variables with finite alphabets $\X=\{1, \dots, M\}$ and $\Y=\{1, \dots, N\}$, res\-pectively,  and with joint distribution $\mathsf{P}=\{P_{XY}(x,y), x\in \X, y\in \Y\}$, whose marginals over $\X$ and $\Y$ are $(p_1, \dots, p_{M})$ and $(q_1, \dots, q_{N})$, respectively. Let $X$ re\-present the private data and $Y$ represent a non-private measurement of $X$, which, upon passing it via a privacy filter $P_{Z|Y}$, is publicly displayed as $Z$. In order to quantify the conflict between privacy with respect to $X$ and utility with respect to $Y$, the so-called rate-privacy function $g(\sP, \eps)$ was introduced in  \cite{Asoode_submitted}. In what follows, we use Arimoto's mutual information to generalize this definition.
\subsection{The Utility-Privacy Function of Order $(\nu, \mu)$}
Let $H_\nu(X)$ and $H_\nu^\mathsf{A}(X|Z)$ denote respectively the R\'{e}nyi entropy of order $\nu$ and Arimoto's conditional entropy of order $\nu$ \cite{Verdu_ALPHA}, defined for $\nu>1$ as
$$H_\nu(X)\coloneqq\frac{1}{1-\nu}\log\left(\sum_{x\in \X}P_X^\nu(x)\right),$$ and
$$H_\nu^\mathsf{A}(X|Z)\coloneqq\frac{\nu}{1-\nu}\log\left(\sum_{z\in \Z}\left[\sum_{x\in \X}P_{XZ}^\nu(x, z)\right]^{1/\nu}\right).$$ We define (by continuity) $H_1(X)=H(X)$, $H_1^\mathsf{A}(X|Z)=H(X|Z)$,  $H_\infty(X)=-\log \cP(X)$, and $H_\infty^\mathsf{A}(X|Z)=-\log\cP(X|Z).$ \emph{Arimoto's mutual information of order $\nu\geq 1$} is defined as (see, e.g., \cite{Verdu_ALPHA}) $$I^\mathsf{A}_\nu(X;Z)\coloneqq H_\nu(X)-H_\nu^\mathsf{A}(X|Z).$$ Thus $I^\mathsf{A}_1(X;Z)=I(X;Z)$.
\begin{definition}\label{Def_RPF}
  For a given joint distribution $\mathsf{P}$ and a pair $(\nu, \mu)$, $\nu, \mu\in[1, \infty]$, the utility-privacy function of order $(\nu, \mu)$ is
  $$g^{(\nu, \mu)}(\mathsf{P}, \eps)\coloneqq\max_{P_{Z|Y}\in \mathcalboondox{D}^{\nu}(\sP, \eps)}I^\mathsf{A}_\mu(Y;Z),$$
where $$\mathcalboondox{D}^\nu(\mathsf{P}, \eps)\coloneqq\{P_{Z|Y}: X\markov Y\markov Z, I^\mathsf{A}_\nu(X;Z)\leq \eps\}.$$
\end{definition}
%It is clear from Minkowski's inequality that $\D^\nu(\mathsf{P}, \eps)$ is a \textcolor{red}{compact} and convex set for any $\nu\geq 1$.
Note that $\mathcalboondox D^\nu(\mathsf{P}, \eps)$ cannot be empty since all channels $P_{Z|Y}$ with $Z$ independent of $X$ satisfy $I_{\nu}^\mathsf{A}(X;Z)=0$, and so they belong to $\D^\nu(\mathsf{P}, \eps)$ for any $\eps\geq 0$. Using a similar technique as in \cite{Witsenhausen_Wyner}, one can show that $\eps\mapsto g^{(\nu, \mu)}(\sP, \eps)$ is strictly increasing for any $\nu, \mu\geq 1$. It is also worth mentioning that an application of Minkowski's inequality implies that the map $P_{Z|Y}\mapsto \exp\left\{\frac{(\nu-1)}{\nu}I^\mathsf{A}_\nu(Y;Z)\right\}$ is convex for $\nu\geq 1$, and thus the maximum in the definition of $g^{(\nu, \mu)}(\mathsf{P}, \eps)$ is achieved at the boundary of the feasible set where $I_\nu^\mathsf{A}(X;Z)=\eps$.
We denote $g^{(\infty, \infty)}(\mathsf{P}, \eps)$ and $g^{(1,1)}(\mathsf{P}, \eps)$ respectively by $g^{\infty}(\mathsf{P}, \eps)$ and $g(\mathsf{P}, \eps)$. Since $I_\infty(Y;Z)=\log\frac{\cP(Y|Z)}{\cP(Y)}$, 
$g^\infty(\mathsf{P}, \eps)$ can be equivalently described as the smallest $\Gamma\geq 0$ such that
$\cP(Y|Z)\leq \mathsf{P}_{\mathsf{c}}(Y) 2^\Gamma,$
for every $P_{Z|Y}$ satis\-fying
$\cP(X|Z)\leq \cP(X) 2^\eps.$
We note that for small $\eps$ the condition $I^\mathsf{A}_\infty(X;Z)\leq \eps$ intuitively means that it is nearly as hard for an \emph{adversary} observing $Z$ to predict $X$ as it is without $Z$. Therefore, $g^\infty(\sP, 0)$ quantifies the efficiency of guessing $Y$ from $Z$ such that $\cP(X|Z)=\cP(X)$. It is thus interesting to obtain a necessary and sufficient condition for $\sP$ under which $g^\infty(\sP, 0)>0$. We obtain such a condition for the special case of binary $X$ and $Y$ in the next section.

In general, the map $\nu\mapsto I_\nu^\mathsf{A}(X;Z)$ is not monotonic\footnote{It is relatively easy to show that if $X$ is uniformly distributed, then $I_\nu^\mathsf{A}(X;Z)$ coincides with Sibson's mutual information  of order $\nu$ \cite{Verdu_ALPHA} which is known to be increasing in $\nu$ \cite[Theorem 4]{Verdu_Convexity}. Consequently,  $\nu\mapsto I_\nu^\mathsf{A}(X;Z)$ is increasing over $(1, \infty]$ if $X$ is uniformly distributed.} and hence $P_{Z|Y}$ might belong to $\mathcalboondox D^\nu(\mathsf{P}, \eps)$ but not to $\mathcalboondox D^\mu(\mathsf{P}, \eps)$ for $\mu<\nu$. Nevertheless, the following lemma allows us to obtain upper and lower bounds for $g^{(\nu, \mu)}(\mathsf{P}, \cdot)$ in terms of $g^{\infty}(\mathsf{P}, \cdot)$.
\begin{lemma}\label{Lemma_Bounds_g_Infy}
Let $(X,Y)$ be a pair of random variables having joint distribution $\sP$ and $\nu,\mu\in(1,\infty)$. Then
\begin{equation*}
g^{(\nu,\mu)}(\sP, \eps) \leq g^\infty(\sP, \psi(\nu, \eps))+H_\mu(Y)-H_\infty(Y),
\end{equation*}
where $\psi(\nu, \eps)\coloneqq\frac{\nu-1}{\nu}\eps + \frac{1}{\nu}H_\infty(X)$. Furthermore, we have for $\eps\geq H_\nu(X)-H_\infty(X)$  that
\begin{equation*}
g^{(\nu,\mu)}(\sP,\eps) \geq \frac{\mu}{\mu-1}g^\infty(\sP,\varphi(\nu, \eps))-\frac{1}{\mu-1}H_\infty(Y),
\end{equation*}
where $\varphi(\nu, \eps)\coloneqq \eps-H_\nu(X)+H_\infty(X)$.
\end{lemma}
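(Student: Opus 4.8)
The plan is to reduce the lemma to two elementary comparison inequalities between the finite-order and the order-$\infty$ Arimoto conditional entropies, and then transfer these separately onto the feasibility constraint and onto the objective. First I would prove that for every $\nu\in(1,\infty)$ and every joint law $P_{XZ}$ one has
$$H_\infty^{\mathsf A}(X|Z)\ \le\ H_\nu^{\mathsf A}(X|Z)\ \le\ \frac{\nu}{\nu-1}\,H_\infty^{\mathsf A}(X|Z),$$
and, as the special case in which $Z$ is trivial, $H_\nu(X)\ge H_\infty(X)$; the same statements hold with $(Y,\mu)$ in place of $(X,\nu)$. For the upper inequality I would use $\big(\sum_x P_{XZ}^\nu(x,z)\big)^{1/\nu}\ge\max_x P_{XZ}(x,z)$ (monotonicity of $\ell^p$-norms), sum over $z$ to get $\sum_z\big(\sum_x P_{XZ}^\nu(x,z)\big)^{1/\nu}\ge\cP(X|Z)$, then take logarithms and multiply by the negative factor $\nu/(1-\nu)$. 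For the lower inequality I would bound $\sum_x P_{XZ}^\nu(x,z)\le\big(\max_x P_{XZ}(x,z)\big)^{\nu-1}P_Z(z)$ and apply H\"older's inequality with exponents $\frac{\nu}{\nu-1}$ and $\nu$ to obtain $\sum_z\big(\sum_x P_{XZ}^\nu(x,z)\big)^{1/\nu}\le\cP(X|Z)^{(\nu-1)/\nu}$, again passing to logarithms and multiplying by $\nu/(1-\nu)$; taking $Z$ trivial specializes this to $H_\nu(X)\ge H_\infty(X)$.

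For the upper bound on $g^{(\nu,\mu)}$, I would take any $P_{Z|Y}\in\mathcalboondox D^\nu(\sP,\eps)$, so that $H_\nu^{\mathsf A}(X|Z)\ge H_\nu(X)-\eps$; combining this with the upper comparison inequality and with $H_\nu(X)\ge H_\infty(X)$ gives $H_\infty^{\mathsf A}(X|Z)\ge\frac{\nu-1}{\nu}(H_\infty(X)-\eps)$, i.e. $I^{\mathsf A}_\infty(X;Z)\le\psi(\nu,\eps)$, which shows $\mathcalboondox D^\nu(\sP,\eps)\subseteq\mathcalboondox D^\infty(\sP,\psi(\nu,\eps))$. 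On the objective side, the lower comparison inequality applied to $Y$ gives $I^{\mathsf A}_\mu(Y;Z)=H_\mu(Y)-H_\mu^{\mathsf A}(Y|Z)\le I^{\mathsf A}_\infty(Y;Z)+H_\mu(Y)-H_\infty(Y)$. Taking the supremum over $\mathcalboondox D^\nu(\sP,\eps)$, pulling out the constant $H_\mu(Y)-H_\infty(Y)$, and enlarging the feasible region to $\mathcalboondox D^\infty(\sP,\psi(\nu,\eps))$ yields exactly $g^{(\nu,\mu)}(\sP,\eps)\le g^\infty(\sP,\psi(\nu,\eps))+H_\mu(Y)-H_\infty(Y)$.

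For the lower bound, assuming $\eps\ge H_\nu(X)-H_\infty(X)$ so that $\varphi(\nu,\eps)\ge 0$, I would let $P_{Z|Y}^\star$ attain the maximum defining $g^\infty(\sP,\varphi(\nu,\eps))$ and let $Z^\star$ be the induced variable. From $I^{\mathsf A}_\infty(X;Z^\star)\le\varphi(\nu,\eps)$ one gets $H_\infty^{\mathsf A}(X|Z^\star)\ge H_\nu(X)-\eps$, hence by the lower comparison inequality $H_\nu^{\mathsf A}(X|Z^\star)\ge H_\infty^{\mathsf A}(X|Z^\star)\ge H_\nu(X)-\eps$, i.e. $P_{Z|Y}^\star\in\mathcalboondox D^\nu(\sP,\eps)$. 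Then $g^{(\nu,\mu)}(\sP,\eps)\ge I^{\mathsf A}_\mu(Y;Z^\star)$, and applying the upper comparison inequality to $Y$ together with $H_\mu(Y)\ge H_\infty(Y)$,
\begin{align*}
I^{\mathsf A}_\mu(Y;Z^\star)&\ge H_\mu(Y)-\frac{\mu}{\mu-1}H_\infty^{\mathsf A}(Y|Z^\star)\\
&\ge\frac{\mu}{\mu-1}I^{\mathsf A}_\infty(Y;Z^\star)-\frac{1}{\mu-1}H_\infty(Y)\\
&=\frac{\mu}{\mu-1}g^\infty(\sP,\varphi(\nu,\eps))-\frac{1}{\mu-1}H_\infty(Y),
\end{align*}
which is the claimed inequality.

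The only genuinely substantive step is the pair of comparison inequalities, and even there the estimates themselves (monotonicity of norms and a single application of H\"older) are routine; the one point demanding care is correctly reversing inequalities when multiplying through by the negative factor $\nu/(1-\nu)$, and keeping straight the signs of the coefficients $\mu/(\mu-1)$ and $1/(\mu-1)$ in the final display. I do not expect any deeper obstacle.
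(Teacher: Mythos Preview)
The paper omits its own proof of this lemma (deferring to the companion reference), so there is no in-paper argument to compare against. Your proposal is correct: the two comparison inequalities $H_\infty^{\mathsf A}(X|Z)\le H_\nu^{\mathsf A}(X|Z)\le\frac{\nu}{\nu-1}H_\infty^{\mathsf A}(X|Z)$ are proved exactly as you indicate (the $\ell^p$-norm monotonicity for the right inequality, the bound $\sum_x P_{XZ}^\nu\le(\max_x P_{XZ})^{\nu-1}P_Z$ followed by H\"older for the left), and the subsequent set-inclusion $\mathcalboondox D^\nu(\sP,\eps)\subseteq\mathcalboondox D^\infty(\sP,\psi(\nu,\eps))$ and the objective comparisons are the natural way to transfer these to $g^{(\nu,\mu)}$. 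The sign bookkeeping you flag is handled correctly throughout, and the existence of an optimizer $P_{Z|Y}^\star$ for $g^\infty$ (needed in the lower-bound half) is justified by the compactness argument implicit in the Support Lemma reduction the paper already invokes.
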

%\begin{lemma}\label{Lemma_Bounds_g_Infy}
%For a given $(X,Y)$ with joint distribution $\mathsf{P}$, we have
%  $$g^{(\nu, \mu)}(\mathsf{P}, \zeta(\nu, \eps))\leq g^\infty(\mathsf{P}, \eps)+H_\mu(Y)-H_\infty(Y),$$
%  for $\nu>1$ and $\mu\geq1$, and
%  $$g^{(\nu, \mu)}(\mathsf{P}, \psi(\nu,\eps))\geq \left(\frac{\mu}{\mu-1}\right)g^\infty(\mathsf{P}, \eps)-\frac{1}{\mu}H_\infty(Y),$$
%for $\nu\geq1$ and $\mu>1$, where
%$$\zeta(\nu, \eps):=\frac{\nu}{\nu-1}\eps-\frac{1}{\nu-1}H_\infty(X),$$ and
%$$\psi(\nu,\eps):=\eps+H_\infty(X)-H_\nu(X).$$
%\end{lemma}

This lemma shows that the family of functions $g^{(\nu, \mu)}(\mathsf{P}, \eps)$ for $\nu, \mu>1$ can be bounded from above and below by $g^\infty(\mathsf{P}, \delta)$, where $\delta$ depends on $\eps$ and $\nu$. The case $\nu=\mu=1$ is studied in \cite{Asoode_submitted}. As a result, in the following section we only focus on $g^\infty(\mathsf{P}, \eps)$.  It turns out that it is easier to study $\mathcalboondox{h}(\mathsf{P}, \eps)$, defined in \eqref{Def_h_eps}, instead. It is straightforward to verify that
$$g^\infty(\mathsf{P}, \eps)=\log\frac{\mathcalboondox{h}(\mathsf{P}, 2^\eps\cP(X))}{\cP(Y)},$$ and hence all the results for $\mathcalboondox{h}(\mathsf{P}, \eps)$ can be translated to results for $g^\infty(\mathsf{P}, \eps)$. In particular, perfect privacy $g^\infty(\mathsf{P}, 0)$ corresponds to $\mathcalboondox{h}(\mathsf{P}, \cP(X))$.
Notice that $\mathcalboondox{h}(\sP, \cP(X))>\cP(Y)$ is equivalent to $g^{\infty}(\sP,0)>0$. As opposed to $I_\nu(X;Z)$ with $1\leq\nu<\infty$, $I_\infty(X;Z)=0$ does not necessarily imply the independence of $X$ and $Z$ (unless $X$ is uniformly distributed). In particular, the  weak independence\footnote{Using a similar proof as in \cite{Asoode_submitted}, it can be shown that $g^{(\nu, \mu)}(\sP, 0)>0$ for $\nu, \mu\in [1, \infty)$ if and only if $X$ is weakly independent of $Y$.} argument from \cite[Lemma 10]{Asoode_submitted} (see also \cite{Calmon_fundamental-Limit}) cannot be applied for $g^\infty$.  For the sake of brevity, we simply write $\mathcalboondox{h}(\eps)$ for $\mathcalboondox{h}(\sP, \eps)$  when there is no risk of confusion.
\subsection{Privacy-Aware Guessing Function}
 It is clear from \eqref{Def_h_eps} that $\cP(Y)\leq \mathcalboondox{h}(\eps)\leq 1$, and $\mathcalboondox{h}(\eps)=1$ if and only if $\eps\geq\cP(X|Y)$. A direct application of the Support Lemma \cite[Lemma 15.4]{csiszarbook} shows that it is enough to consider random variables $Z$ supported on $\Z=\{1,\ldots,N+1\}$. Thus, the privacy filter $P_{Z|Y}$ can be realized by an $N\times (N+1)$ stochastic matrix $F$. Let $\F$ be the set of all such matrices. Then both utility $\U(\sP,F)=\cP(Y|Z)$ and privacy $\P(\sP, F)=\cP(X|Z)$ are functions of $F\in \F$ and we can express $\mathcalboondox{h}(\eps)$ as $$\mathcalboondox{h}(\eps)=\max_{F\in \F, \atop \P(\sP, \eps)\leq \eps}\U(\sP, F).$$
% Then $\mathcalboondox{h}(\sP, \eps)$ can be equivalently written as
%$$\mathcalboondox{h}(\sP, \eps)=\max_{F\in\F \atop \P(\sP, F)\leq \eps} \U(\sP, F),$$
%where we use $\P(\sP, F)\coloneqq\cP(X|Z)$ and $\U(\sP,F)\coloneqq\cP(Y|Z)$ to emphasize the dependence of utility and privacy on $F$.
%where we denote  $\P(\sP, F)\coloneqq\sum_{z=1}^{N+1} \max_{1\leq x\leq M}\sum_{y=1}^N \sP(x,y)F(y,z)$ and $\U(\sP,F)\coloneqq\sum_{z=1}^{N+1} \max_{1\leq y\leq N} \sq(y)F(y,z)$, instead of $\cP(X|Z)$ and $\cP(Y|Z)$, respectively, to emphasize the dependence of utility and privacy on $F$.
%
It can be verified that $F\mapsto\P(\sP,F)$ and
$F\mapsto\U(\sP,F)$ are continuous convex functions over $\F$.   It can also be shown that the set $$\mathcal{R}\coloneqq \{(\P(\sP, F), \U(\sP, F)):~F\in \F\}$$ is convex. Furthermore, since the graph of  $\mathcalboondox{h}(\eps)$ is the upper boundary of $\mathcal{R}$, we conclude that $\eps\mapsto \mathcalboondox{h}(\eps)$ is concave, and so it is strictly increasing and continuous on $[\cP(X),\cP(X|Y)]$. As a consequence, for every $\eps\in[\cP(X),\cP(X|Y)]$ there exists $G$ such that $\P(\sP,G)=\eps$ and $\U(\sP,G)=\mathcalboondox{h}(\eps)$. We call such a privacy filter $G$ optimal at $\eps$.

%It can be shown that $\D^\infty(\sP, \eps)$ is convex and compact, and thus $\eps\mapsto \mathcalboondox{h}(\mathsf{P},\eps)$ is concave. This, together with the definition \eqref{Def_h_eps}, implies that $\mathcalboondox{h}(\mathsf{P},\eps)$ is strictly increasing in $\eps$ for $\eps\in [\cP(X), \cP(X|Y)]$ with the property that $\mathcalboondox{h}(\mathsf{P},\cP(X|Y))=1$. For the sake of brevity, we denote $\mathcalboondox{h}(\mathsf{P},\eps)$ simply by $\mathcalboondox{h}(\eps)$.
%An straightforward application of the Support Lemma \cite[Lemma 15.4]{csiszarbook} implies that $Z$ with $|\Z|\leq |\Y|+1$ is sufficient to achieve $\mathcalboondox{h}(\eps)$.

The following theorem reveals that $\mathcalboondox{h}(\cdot)$ is a piecewise linear function, as depicted in Fig.~\ref{fig:Typical_h}.

%The following theorem states that $\mathcalboondox{h}(\eps)$ is a piecewise linear function in $\eps$. Fig. \ref{fig:Typical_h} shows a typical such $\mathcalboondox{h}(\eps)$.
\begin{theorem}
\label{Thm:PiecewiseLinearity}
The function $\mathcalboondox{h}:[\cP(X),\cP(X|Y)]\to\R$ is piecewise linear, i.e., there exist $K\geq1$ and thresholds $\cP(X)=\eps_0<\eps_1<\ldots<\eps_K=\cP(X|Y)$ such that $\mathcalboondox{h}$ is linear on $[\eps_{i-1}, \eps_i]$ for all $1\leq i\leq K$.
\end{theorem}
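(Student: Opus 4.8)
The plan is to reduce the theorem to a single structural fact: the region $\mathcal{R}=\{(\P(\sP,F),\U(\sP,F)):F\in\F\}$ is a (bounded) polytope. Granting this, the theorem is immediate, because it was already observed that $\mathcal{R}$ is convex and that the graph of $\mathcalboondox{h}$ over $[\cP(X),\cP(X|Y)]$ coincides with the upper boundary of $\mathcal{R}$, and the upper boundary of a planar polytope is a concave curve consisting of finitely many line segments. Taking $\cP(X)=\eps_0<\eps_1<\dots<\eps_K=\cP(X|Y)$ to be the $\eps$-coordinates of the vertices of $\mathcal{R}$ that lie on this upper boundary then yields the asserted decomposition, with $\mathcalboondox{h}$ affine on each $[\eps_{i-1},\eps_i]$.

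To show $\mathcal{R}$ is a polytope, I would realize a privacy filter as a row-stochastic matrix $F=(F_{y,z})\in\F$, so that $P_{YZ}(y,z)=q_yF_{y,z}$ and $P_{XZ}(x,z)=\sum_{y}P_{XY}(x,y)F_{y,z}$, and hence
\[
\U(\sP,F)=\sum_{z\in\Z}\max_{y\in\Y}\,q_yF_{y,z},\qquad \P(\sP,F)=\sum_{z\in\Z}\max_{x\in\X}\,\sum_{y\in\Y}P_{XY}(x,y)F_{y,z}.
\]
Both are sums, over the columns $z$, of pointwise maxima of \emph{linear} functionals of $F$. For each pair of selection maps $\sigma\colon\Z\to\Y$ and $\tau\colon\Z\to\X$, let $\F_{\sigma,\tau}\subseteq\F$ be the set of $F$ such that, in every column $z$, the index $\sigma(z)$ attains $\max_{y}q_yF_{y,z}$ and $\tau(z)$ attains $\max_{x}\sum_{y}P_{XY}(x,y)F_{y,z}$. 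Each $\F_{\sigma,\tau}$ is the intersection of the bounded polytope $\F$ with finitely many halfspaces, hence a polytope; there are only finitely many pairs $(\sigma,\tau)$; and $\F=\bigcup_{\sigma,\tau}\F_{\sigma,\tau}$. On $\F_{\sigma,\tau}$ both $\U(\sP,\cdot)$ and $\P(\sP,\cdot)$ agree with fixed linear functionals, so $F\mapsto(\P(\sP,F),\U(\sP,F))$ is affine there, and its image $\mathcal{R}_{\sigma,\tau}$, being the affine image of a polytope, is again a polytope in $\R^2$.

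Consequently $\mathcal{R}=\bigcup_{\sigma,\tau}\mathcal{R}_{\sigma,\tau}$ is a finite union of polytopes. Since $\mathcal{R}$ is moreover convex and compact (the continuous image of the compact set $\F$), it equals $\mathrm{conv}\big(\bigcup_{\sigma,\tau}\mathcal{R}_{\sigma,\tau}\big)=\mathrm{conv}\big(\bigcup_{\sigma,\tau}\mathrm{ext}(\mathcal{R}_{\sigma,\tau})\big)$, the convex hull of a finite set of points, hence a polytope; this closes the argument. A parallel route that avoids $\mathcal{R}$ altogether: restricted to a cell $\F_{\sigma,\tau}$, the map $\eps\mapsto\mathcalboondox{h}(\eps)$ is the optimal value of a linear program in which $\eps$ appears only as the right-hand side of the constraint $\P(\sP,F)\le\eps$, hence a concave piecewise-linear function of $\eps$ on its feasible range, and $\mathcalboondox{h}$ is the pointwise maximum over the finitely many cells of these functions, so it is piecewise linear.

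The step I expect to need the most care is the concrete bookkeeping of the cells $\F_{\sigma,\tau}$ and their images: verifying that the partition of $\F$ is finite and exhaustive and that on each cell the two functionals are genuinely linear (rather than merely convex), so that the a priori arbitrary convex body $\mathcal{R}$ is forced to have only finitely many extreme points. Everything after that is elementary planar convex geometry, together with the properties of $\mathcalboondox{h}$ and $\mathcal{R}$ recorded before the theorem.
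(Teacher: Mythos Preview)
Your argument is correct and different from the paper's. You proceed globally: partition $\F$ into finitely many polyhedral cells $\F_{\sigma,\tau}$ indexed by the selection maps that realize the column-wise maxima in $\U$ and $\P$; on each cell the pair $(\P,\U)$ is affine, so each image $\mathcal{R}_{\sigma,\tau}$ is a planar polytope, and the convexity of $\mathcal{R}$ (stated just before the theorem) forces $\mathcal{R}=\mathrm{conv}\big(\bigcup_{\sigma,\tau}\mathrm{ext}(\mathcal{R}_{\sigma,\tau})\big)$ to be a polytope, whose upper boundary is the graph of $\mathcalboondox{h}$. This is clean and essentially the parametric-LP viewpoint. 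The paper, by contrast, works locally: its key lemma asserts that for every $G\in\F$ there exists a single $\delta>0$ such that $\H=(\P,\U)$ is affine on the segment $[G,G+\delta D]$ for \emph{every} admissible direction $D\in\D(G)$, and piecewise linearity of $\mathcalboondox{h}$ is then extracted from this uniform directional linearity. The payoff of the paper's route is that it ties the slope of $\mathcalboondox{h}$ on the last piece directly to the set of optimal filters at $\eps=\cP(X|Y)$, which is exactly the ingredient used later in the proof of Theorem~\ref{Theorem_Linearity_BIBO}; your decomposition gives the same information in principle (the slope on a piece is determined by the active cell $\F_{\sigma,\tau}$), but the link to optimal filters at a single $\eps$ is less explicit. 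Conversely, your approach is shorter and makes the finiteness transparent via the finitely many pairs $(\sigma,\tau)$, without any local-to-global or compactness step.
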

\begin{figure}
\hspace{0.1cm}
\begin{tikzpicture}[thick, scale=0.68]
\begin{axis}[width = 4.2in, height = 2in, xmin=0,xmax=1,ymin=0.4,ymax=1, y=10.5cm,
  samples=100,  y label style={at={(axis description cs:0.01,0.4)},rotate=-90,anchor=south},
  grid=both,xlabel= $\eps$, ylabel=${\mathcalboondox{h}(\eps)}$,
  no markers]
%\addplot[no marks,blue,line width=1pt,samples=600] {2*x-0.4};
\addplot[no marks,blue,dotted, line width=1pt,samples=600] {0.923*x+0.215};
\addplot[domain=0.2:0.35][ no marks,red,line width=1pt,samples=600] {2*x};
\addplot[domain=0.35:0.5][no marks,red,line width=1pt,samples=600] {x+0.35};
\addplot[domain=0.5:0.65][no marks,red,line width=1pt,samples=600] {0.6*x+0.55};
\addplot[domain=0.65:0.85][no marks,red,line width=1pt,samples=600] {0.3*x+0.745};
%\addplot[domain=0.9:1][no marks,red,line width=1pt,samples=800] {0.07*x-0.042+0.42+1.55-0.28-5/7};
\addplot[mark=none, green, dotted, line width=1pt] coordinates {(0.85,0) (0.85,1)};
\end{axis}
\node[below,blue] at (7.91, -0.39) {\tiny{$\cP(X|Y)$}};
\node[below,blue] at (1.95, -0.39) {\tiny{$\mathsf{P}_\mathsf{c}(X)$}};
\end{tikzpicture}
\caption{Typical graph of $\mathcalboondox{h}(\eps)$. The dotted line represents the chord connecting $(p, \mathcalboondox{h}(p))$ and $(\cP(X|Y), 1)$ which can be viewed as a trivial lower bound for $\mathcalboondox{h}(\cdot)$.}
\label{fig:Typical_h}
\end{figure}
Consider the map $\H:\F\to[0,1]\times[0,1]$ given by $\H(\sP, F) = (\P(\sP, F),\U(\sP, F))$. Let $\D \coloneqq \left\{D\in\M_{N \times N+1}: \|D\|=1\right\}$, where $||\cdot||$ denotes the Euclidean norm on $\M_{N\times (N+1)}$, the set of real matrices of size $N\times (N+1)$. For $G\in\F$ define  $$\D(G) \coloneqq \left\{D\in \D: G+tD\in \F \textnormal{ for some } t>0\right\}.$$ The proof of the previous theorem is heavily based on the following technical, yet intuitive, result: for every $G\in\F$, there exists $\delta>0$ such that $\H$ is linear on $[G,G+\delta D]$ for \textit{every} $D\in\D(G)$.

The proof technique allows us to derive the slope of $\mathcalboondox{h}$ on $[\eps_{i-1}, \eps_i]$, given the family of optimal filters at a single point $\eps\in [\eps_{i-1}, \eps_i]$. For example, since the family of optimal filters at $\eps=\cP(X|Y)$ is easily obtainable, it is then possible to compute $\mathcalboondox{h}$ on the last interval. In the binary case, this observation and the concavity of $\mathcalboondox{h}$ allow us to show that $\mathcalboondox{h}$ is linear on its entire domain $[\cP(X), \cP(X|Y)]$.
%The main idea is to show that the privacy and utility measures are piecewise linear and to establish a piecewise linear path of optimal filters.
\vspace{-0.02cm}
\subsection{Binary Case}
Assume now that $X$ and $Y$ are both binary. Let $\mathsf{BIBO}(\alpha,\beta)$ denote a binary input binary output channel from $X$ to $Y$ with  $P_{Y|X}(\cdot|0)=(\bar{\alpha}, \alpha)$ and $P_{Y|X}(\cdot|1)=(\beta, \bar{\beta})$, where $\bar{x}\coloneqq 1-x$ for $x\in [0,1]$.
% and the channel $P_{Y|X}$ from $X$ to $Y$ is a binary input binary output channel with crossover probabilities $\alpha$ and $\beta$, denoted by $\mathsf{BIBO}(\alpha, \beta)$, and given by   $P_{Y|X}(\cdot|0)=(\bar{\alpha}, \alpha)$ and $P_{Y|X}(\cdot|1)=(\beta, \bar{\beta})$ for $0\leq\alpha, \beta\leq \frac{1}{2}$, where $\bar{\alpha}\coloneqq 1-\alpha$.
Notice that if $X\sim\sBer(p)$ with $p\in[\frac{1}{2},1)$, then $\cP(X)=p$ and hence $\mathcalboondox{h}(p)$ corresponds to the maximum of $\cP(Y|Z)$ under perfect privacy $\cP(X|Z)=p$. Furthermore, if $P_{Y|X}=\mathsf{BIBO}(\alpha,\beta)$ with $\alpha,\beta\in[0,\frac{1}{2})$, then we have $\cP(X|Y)=\max\{\bar{\alpha}\bar{p}, \beta p\}+\bar{\beta}p.$ %$\bar{\alpha}\bar{p}\leq\beta p$ implies that $\cP(X|Y)=p=\cP(X)$ and hence $h(p)=1$.
Notice that if $\bar{\alpha}\bar{p}\leq\beta p$, then  $\cP(X|Y)=\cP(X)=p$. 

The binary symmetric channel with crossover probability $\alpha$, denoted by $\mathsf{BSC}(\alpha)$, and also the Z-channel with crossover probability $\beta$, denoted by $\mathsf{Z}(\beta)$, are both examples of $\mathsf{BIBO}(\alpha, \beta)$, corresponding to $\alpha=\beta$ and $\alpha=0$, respectively.
Let $q:=\Pr(Y=1)$.  We say that perfect privacy yields a non-trivial utility if $\cP(Y|Z)>\cP(Y)$ for some $Z$ such that $\cP(X|Z)=\cP(X)$, or equivalently, if $\mathcalboondox{h}(p)>\max\{\bar{q}, q\}$. The following lemma determines $\mathcalboondox{h}(p)$ in the non-trivial case $\bar{\alpha}\bar{p}>\beta p$.
\begin{lemma}
\label{lemma_Perfect_Privacy_BIBO}
Let $X\sim\sBer(p)$ with $p\in[\frac{1}{2},1)$ and $P_{Y|X}=\mathsf{BIBO}(\alpha,\beta)$ with $\alpha,\beta\in[0,\frac{1}{2})$ such that $\bar{\alpha}\bar{p}>\beta p$. Then
\begin{equation*}\mathcalboondox{h}(p) = \begin{cases}1-\zeta q & \text{if}~ \alpha\bar{\alpha}\bar{p}^2<\beta\bar{\beta}p^2,\\q & \text{otherwise},\end{cases}
\end{equation*}
where $q=\alpha\bar{p}+\bar{\beta}p$ and 
\begin{equation}\label{Zeta_p}
  \zeta:=\frac{\bar{\alpha}\bar{p}-\beta p}{\bar{\beta}p-\alpha\bar{p}}.
\end{equation}
\end{lemma}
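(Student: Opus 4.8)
The plan is to solve the perfect-privacy problem directly, since at $\eps=\cP(X)=p$ the constraint $\cP(X|Z)\le\eps$ is equivalent to $\cP(X|Z)=p$ (recall $\cP(X|Z)\ge\cP(X)$ always). The first step is to turn this into a constraint on the filter. For binary $X$ with $P_X(1)=p\ge\tfrac12$ one has $\cP(X|Z)=\sum_z P_Z(z)\max\{P_{X|Z}(0|z),P_{X|Z}(1|z)\}\ge\sum_z P_Z(z)P_{X|Z}(1|z)=p$, with equality \emph{iff} $P_{X|Z}(1|z)\ge\tfrac12$ for every $z$ in the support of $Z$. Writing $s_z:=P_{Z|Y}(z|0)$ and $t_z:=P_{Z|Y}(z|1)$ and expanding $P_{X|Z}(1|z)$ through the backward channel $\mathsf{BIBO}(\alpha,\beta)$, the inequality $P_{X|Z}(1|z)\ge\tfrac12$ reads $p\beta s_z+p\bar\beta t_z\ge\bar p\bar\alpha s_z+\bar p\alpha t_z$, which (since both $\bar\beta p-\alpha\bar p>0$ and, by the standing hypothesis, $\bar\alpha\bar p-\beta p>0$) is exactly $t_z\ge\zeta s_z$ with $\zeta$ as in \eqref{Zeta_p}; here one also records $0<\zeta\le1$, the bound $\zeta\le1$ amounting to $\bar p\le p$. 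Thus perfect privacy holds \emph{if and only if} $t_z\ge\zeta s_z$ for every $z$.

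Next I reduce the support of $Z$. With $q=P_Y(1)=\alpha\bar p+\bar\beta p$ and $\bar q=P_Y(0)$, utility is $\cP(Y|Z)=\sum_z\max\{\bar q s_z,\,q t_z\}$. Split the support into the symbols with $\bar q s_z\ge q t_z$ and the rest, and merge each class into one symbol: within a class the coordinatewise maxima add, so $\cP(Y|Z)$ is unchanged, and the inequalities $t_z\ge\zeta s_z$ add, so feasibility is preserved. Hence it suffices to optimize over $|\Z|\le2$.

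For the two-symbol problem one maximizes $\phi(s_0,t_0):=\max\{\bar q s_0,q t_0\}+\max\{\bar q(1-s_0),q(1-t_0)\}$ over the set defined by $\zeta s_0\le t_0$ and $\zeta(1-s_0)\le1-t_0$, which is a parallelogram with vertices $(0,0),\,(0,1-\zeta),\,(1,\zeta),\,(1,1)$. Since $\phi$ is convex, its maximum over this polytope is attained at a vertex: the vertices $(0,0)$ and $(1,1)$ give the trivial value $\max\{\bar q,q\}=\cP(Y)$, and $(1,\zeta)$ and $(0,1-\zeta)$ both give $q(1-\zeta)+\max\{\bar q,\zeta q\}$. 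Therefore $\mathcalboondox{h}(p)$ equals $1-\zeta q$ when $\bar q\ge\zeta q$, and equals $q$ otherwise (in the latter case $\bar q<\zeta q\le q$ forces $\cP(Y)=q$ as well, so nothing larger appears).

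The last step is to rewrite the case split. Clearing denominators, $\bar q\ge\zeta q$ reads $(\bar\alpha\bar p+\beta p)(\bar\beta p-\alpha\bar p)\ge(\alpha\bar p+\bar\beta p)(\bar\alpha\bar p-\beta p)$, and multiplying out both sides makes everything cancel except $\beta\bar\beta p^2\ge\alpha\bar\alpha\bar p^2$; at equality $1-\zeta q=1-\bar q=q$, so the two pieces agree on the boundary and the formula follows as stated. I expect the only points needing care to be (i) the exact equality condition in $\cP(X|Z)\ge\cP(X)$, which is what pins down the feasible region, and (ii) the algebraic identity in the last step; the optimization itself is routine once the problem has been reduced to the parallelogram.
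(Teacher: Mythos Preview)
Your argument is correct. The paper does not include a proof of this lemma (it defers all omitted proofs to the extended version \cite{Asoode_H_EPSILON}), so there is no in-paper proof to compare against; what can be said is that every step you propose checks out. The key move---recognizing that $\cP(X|Z)=p$ forces $P_{X|Z}(1|z)\ge\tfrac12$ for every $z$, and that under the Markov chain this is exactly the linear inequality $t_z\ge\zeta s_z$---is correct, as is the sign analysis showing $0<\zeta\le1$. Your merging argument reducing to $|\Z|\le2$ is valid (both utility and the feasibility inequalities are additive within each class), and in fact slightly sharper than the generic bound $|\Z|\le N+1=3$ the paper invokes via the Support Lemma. The parallelogram is the full feasible region because $0<\zeta\le1$ keeps all four vertices in $[0,1]^2$, and the vertex evaluation together with the algebraic identity $\bar q\ge\zeta q\iff\beta\bar\beta p^2\ge\alpha\bar\alpha\bar p^2$ is accurate; the boundary case indeed collapses both branches to $q$, matching the statement.
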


Notice that $1-\zeta q > \bar{q}$ if and only if $\zeta<1$, which occurs if and only if $p\in(\frac{1}{2},1)$. Also, it is straightforward to show that $1-\zeta q > q$ if and only if $\alpha\bar{\alpha}\bar{p}^2< \beta\bar{\beta}p^2$. In particular, we have the following necessary and sufficient condition for non-trivial utility under perfect privacy.

\begin{corollary}
Let $X\sim\sBer(p)$ with $p\in[\frac{1}{2},1)$ and $P_{Y|X}=\mathsf{BIBO}(\alpha,\beta)$ with $\alpha,\beta\in[0,\frac{1}{2})$ such that $\bar{\alpha}\bar{p}>\beta p$. Then $g^\infty(\sP, 0)>0$ if and only if $\alpha\bar{\alpha}\bar{p}^2<\beta\bar{\beta}p^2$ and $p\in(\frac{1}{2},1)$.
\end{corollary}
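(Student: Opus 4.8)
The plan is to deduce the corollary directly from Lemma~\ref{lemma_Perfect_Privacy_BIBO} together with the identity $g^\infty(\sP,0)=\log\tfrac{\mathcalboondox{h}(\sP,\cP(X))}{\cP(Y)}$ recorded earlier, which shows that $g^\infty(\sP,0)>0$ is equivalent to $\mathcalboondox{h}(p)>\cP(Y)$. Under the present hypotheses $X\sim\sBer(p)$ with $p\in[\tfrac12,1)$ and $P_{Y|X}=\mathsf{BIBO}(\alpha,\beta)$, we have $\cP(X)=p$ and $\cP(Y)=\max\{q,\bar q\}$ with $q=\alpha\bar p+\bar\beta p$; so the statement to be proved reduces to: $\mathcalboondox{h}(p)>\max\{q,\bar q\}$ if and only if $\alpha\bar\alpha\bar p^2<\beta\bar\beta p^2$ and $p\in(\tfrac12,1)$.

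For the ``if'' direction I would assume both conditions. Since the hypothesis $\bar\alpha\bar p>\beta p$ is in force, Lemma~\ref{lemma_Perfect_Privacy_BIBO} applies, and as its first branch condition $\alpha\bar\alpha\bar p^2<\beta\bar\beta p^2$ holds by assumption we get $\mathcalboondox{h}(p)=1-\zeta q$. I would then invoke the two elementary equivalences stated in the remark just before the corollary, namely $1-\zeta q>\bar q \Leftrightarrow \zeta<1 \Leftrightarrow p\in(\tfrac12,1)$ and $1-\zeta q>q \Leftrightarrow \alpha\bar\alpha\bar p^2<\beta\bar\beta p^2$. Both right-hand sides hold, so $1-\zeta q>\max\{q,\bar q\}$, i.e.\ $g^\infty(\sP,0)>0$. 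For the ``only if'' direction, suppose $g^\infty(\sP,0)>0$, i.e.\ $\mathcalboondox{h}(p)>\max\{q,\bar q\}\ge q$. Were we in the second branch of Lemma~\ref{lemma_Perfect_Privacy_BIBO} we would have $\mathcalboondox{h}(p)=q$, contradicting $\mathcalboondox{h}(p)>q$; hence the first branch holds, giving $\alpha\bar\alpha\bar p^2<\beta\bar\beta p^2$ and $\mathcalboondox{h}(p)=1-\zeta q$. From $1-\zeta q>\bar q$ and the first equivalence, $\zeta<1$, i.e.\ $p\in(\tfrac12,1)$, which closes the argument.

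There is no real obstacle: the substantive content sits in Lemma~\ref{lemma_Perfect_Privacy_BIBO} and in the two sign computations that precede the statement. For completeness one checks, using $q=\alpha\bar p+\bar\beta p$, $\bar q=\bar\alpha\bar p+\beta p$, $\zeta=(\bar\alpha\bar p-\beta p)/(\bar\beta p-\alpha\bar p)$, and the fact (immediate from $\alpha,\beta<\tfrac12$ and $p\ge\tfrac12$) that the denominator $\bar\beta p-\alpha\bar p$ and $q$ are strictly positive, that $1-\zeta q>\bar q\Leftrightarrow\zeta q<q\Leftrightarrow\zeta<1\Leftrightarrow(\bar\alpha+\alpha)\bar p<(\bar\beta+\beta)p\Leftrightarrow\bar p<p$, while the second equivalence follows by clearing denominators and collecting terms. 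The one point deserving a sentence of care is that the second branch of the Lemma in fact forces $\mathcalboondox{h}(p)=q=\cP(Y)$ (consistent with the general bound $\mathcalboondox{h}(p)\ge\cP(Y)$), which is precisely what makes the contradiction in the ``only if'' direction legitimate.
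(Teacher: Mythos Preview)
Your proposal is correct and follows precisely the argument the paper sketches in the two sentences preceding the corollary: reduce $g^\infty(\sP,0)>0$ to $\mathcalboondox{h}(p)>\max\{q,\bar q\}$, invoke Lemma~\ref{lemma_Perfect_Privacy_BIBO}, and then use the two equivalences $1-\zeta q>\bar q\Leftrightarrow p\in(\tfrac12,1)$ and $1-\zeta q>q\Leftrightarrow\alpha\bar\alpha\bar p^2<\beta\bar\beta p^2$. Your handling of the ``only if'' direction via the second branch of the lemma (where $\mathcalboondox{h}(p)=q\le\cP(Y)$ forces $g^\infty(\sP,0)\le0$) is exactly the intended contrapositive.
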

Remark that the condition $\alpha\bar{\alpha}\bar{p}^2< \beta\bar{\beta}p^2$ can be equivalently written  as $$P_{X|Y}(0|1)P_{X|Y}(0|0)<P_{X|Y}(1|0)P_{X|Y}(1|1).$$

The following theorem establishes the linear behavior of $\mathcalboondox{h}$ when $P_{Y|X}=\mathsf{BIBO}(\alpha, \beta)$.

\begin{theorem}\label{Theorem_Linearity_BIBO}
Let $X\sim \sBer(p)$ for $p\in [\frac{1}{2}, 1)$ and $P_{Y|X}=\mathsf{BIBO}(\alpha, \beta)$ with $\alpha,\beta\in[0,\frac{1}{2})$. If $\bar{\alpha}\bar{p}>\beta p$, then for any $\eps\in [p, \bar{\alpha}\bar{p}+\bar{\beta}p]$, we have the following:
\begin{itemize}
  \item If $\alpha\bar{\alpha}\bar{p}^2<\beta\bar{\beta}p^2$, then
      $$\mathcalboondox{h}(\eps)=1-\zeta(\eps) q,$$ where $q=\alpha\bar{p}+\bar{\beta}p$ and  \begin{equation}\label{zeta}
        \zeta(\eps):=\frac{\bar{\alpha}\bar{p}+\bar{\beta}p-\eps}{\bar{\beta} p-\alpha\bar{p}}.
      \end{equation}
      Furthermore, $\mathcalboondox{h}(\eps)$ is achieved by the Z-channel $\mathsf{Z}(\zeta(\eps))$ (as shown in Fig.\ \ref{fig:Optimal_Filter_BIBO}).
  \item If $\alpha\bar{\alpha}\bar{p}^2\geq\beta\bar{\beta}p^2$, then
            $$\mathcalboondox{h}(\eps)=1-\tilde{\zeta}(\eps) \bar{q},$$
where
$$\tilde{\zeta}(\eps):=\frac{\bar{\alpha}\bar{p}+\bar{\beta}p-\eps}{\bar{\alpha}\bar{p}-\beta p}.$$
Moreover, $\mathcalboondox{h}(\eps)$ is achieved by a reverse Z-channel with crossover probability $\tilde{\zeta}(\eps)$ (as shown in Fig.\ \ref{fig:Optimal_Filter_BIBO}).
\end{itemize}
\end{theorem}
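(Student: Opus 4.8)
The plan is to prove that $\mathcalboondox{h}$ coincides on its domain $[\cP(X),\cP(X|Y)]$ with the affine function $\ell$ appearing in the statement, namely $\ell(\eps)=1-\zeta(\eps)q$ in the first case and $\ell(\eps)=1-\tilde\zeta(\eps)\bar q$ in the second, exploiting the concavity of $\mathcalboondox{h}$. First I would record the bookkeeping: since $\bar\alpha\bar p>\beta p$ we have $\cP(X|Y)=\bar\alpha\bar p+\bar\beta p$, so the domain is exactly $[p,\bar\alpha\bar p+\bar\beta p]$; the maps $\zeta,\tilde\zeta$ are affine, hence so is $\ell$; and $\ell(\bar\alpha\bar p+\bar\beta p)=1=\mathcalboondox{h}(\cP(X|Y))$. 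Because $\mathcalboondox{h}$ is concave on this interval, it then suffices to prove (i) the lower bound $\mathcalboondox{h}(\eps)\ge\ell(\eps)$, and (ii) that the left derivative of $\mathcalboondox{h}$ at $\cP(X|Y)$ equals the slope of $\ell$; for then $\ell$ is the tangent line to $\mathcalboondox{h}$ at the right endpoint, and a concave function lies below its tangent, so $\mathcalboondox{h}(\eps)\le\ell(\eps)$ as well.

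For (i) I would substitute the candidate filter into the definition. Taking $P_{Z|Y}=\mathsf{Z}(\zeta(\eps))$ (resp.\ the reverse $\mathsf Z$-channel with crossover $\tilde\zeta(\eps)$) and propagating through $X\markov Y\markov Z$ with $P_{Y|X}=\mathsf{BIBO}(\alpha,\beta)$ produces an explicit $3$-column matrix $P_{XZ}$ (one column identically zero), and a short computation of $\sum_z\max_x P_{XZ}(x,z)$ and $\sum_z\max_y P_{YZ}(y,z)$ yields, after substituting the expression for $\zeta(\eps)$, precisely $\cP(X|Z)=\eps$ and $\cP(Y|Z)=1-\zeta(\eps)q$. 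The hypothesis $\alpha\bar\alpha\bar p^2<\beta\bar\beta p^2$ is exactly what makes this work: it is equivalent to $\zeta(\eps)\le\zeta(p)<\bar q/q$ throughout the interval, which forces the relevant maxima to be attained at the entries producing these closed forms. Thus $\mathcalboondox{h}(\eps)\ge1-\zeta(\eps)q$ (and symmetrically in the second case). Consistently, at $\eps=p$ the right-hand side is the value of $\mathcalboondox{h}(p)$ furnished by Lemma~\ref{lemma_Perfect_Privacy_BIBO}.

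Step (ii) is the crux, and is where the apparatus behind Theorem~\ref{Thm:PiecewiseLinearity} enters. At $\eps=\cP(X|Y)$ the optimal filters are easy to list: $\cP(Y|Z)=1$ forces each column of the filter matrix to be supported on a single row, i.e.\ $Y$ is a deterministic function of $Z$; in particular the identity filter $Z=Y$ is optimal, and is privacy-feasible by the data processing inequality. Using the local-linearity fact quoted after Theorem~\ref{Thm:PiecewiseLinearity} ($\H$ is affine on $[G,G+\delta D]$ for every $D\in\D(G)$), the left derivative of $\mathcalboondox{h}$ at $\cP(X|Y)$ equals the infimum over optimal $G$ and admissible $D$ of $\lim_{t\downarrow0}\frac{\U(\sP,G)-\U(\sP,G+tD)}{\P(\sP,G)-\P(\sP,G+tD)}$. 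Perturbing the identity filter --- moving a small amount of $Y=1$ mass (resp.\ of $Y=0$ mass) onto the other original $Z$-symbol or onto the unused third symbol --- and tracking which entries dominate the maxima defining $\cP(X|Z)$ and $\cP(Y|Z)$ (here $\bar\alpha\bar p>\beta p$ and $p\ge\frac12$ fix the signs) reduces this to minimizing a ratio of two nonnegative linear forms in the perturbation amounts; I expect the value to be $\min\bigl\{\frac{q}{\bar\beta p-\alpha\bar p},\,\frac{\bar q}{\bar\alpha\bar p-\beta p}\bigr\}$, the two numbers coming from perturbing the $Y=1$ and the $Y=0$ rows respectively. The algebraic identity
\[
q(\bar\alpha\bar p-\beta p)-\bar q(\bar\beta p-\alpha\bar p)=2\bigl(\alpha\bar\alpha\bar p^2-\beta\bar\beta p^2\bigr)
\]
then selects the first number as the minimum exactly when $\alpha\bar\alpha\bar p^2<\beta\bar\beta p^2$ and the second otherwise, and in each case this equals the slope of $\ell$. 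Combined with (i) this yields $\mathcalboondox{h}=\ell$, and the filters of (i) are the claimed optimizers.

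The main obstacle is Step (ii): beyond the sign-bookkeeping in the perturbation, the delicate point is to justify that the infimum defining $\mathcalboondox{h}'(\cP(X|Y)^-)$ is genuinely attained by the two elementary perturbations of the identity filter --- that no other optimal filter at $\cP(X|Y)$ admits a descent direction with a smaller utility-to-privacy decrement ratio --- which is precisely the sort of fact the machinery developed for Theorem~\ref{Thm:PiecewiseLinearity} is designed to supply. Concretely, one may instead run Step (ii) through that theorem directly: compute the slope of the last linear piece of $\mathcalboondox{h}$ from the optimal filters at $\cP(X|Y)$, and then use concavity together with the value of $\mathcalboondox{h}(p)$ from Lemma~\ref{lemma_Perfect_Privacy_BIBO} (so that the computed slope equals the average slope $\frac{1-\mathcalboondox{h}(p)}{\cP(X|Y)-p}$) to force every linear piece to share that slope, hence $\mathcalboondox{h}$ to be linear throughout.
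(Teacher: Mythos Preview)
Your proposal is correct and follows essentially the same strategy as the paper's proof sketch: obtain a linear lower bound for $\mathcalboondox{h}$, compute the (left) derivative of $\mathcalboondox{h}$ at $\cP(X|Y)$ via the local-linearity machinery behind Theorem~\ref{Thm:PiecewiseLinearity}, observe it equals the slope of that line, and conclude by concavity. The only cosmetic difference is that for the lower bound you plug in the candidate $\mathsf Z$-channel directly and compute $\cP(X|Z)$, $\cP(Y|Z)$, whereas the paper invokes Lemma~\ref{lemma_Perfect_Privacy_BIBO} for $\mathcalboondox{h}(p)$ and then draws the chord to $(\cP(X|Y),1)$; these yield the same affine function $\ell$, and indeed your final paragraph spells out exactly the paper's variant.
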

\begin{proof}[Proof Sketch]
 Recall that $\eps\mapsto\mathcalboondox{h}(\eps)$ is concave, and thus its graph lies above the segment connecting $(p,\mathcalboondox{h}(p))$ to $(\cP(X|Y),1)$. In particular,
\begin{equation*}
  \mathcalboondox{h}(\eps) \geq \mathcalboondox{h}(p) + (\eps-p)\left[\frac{1-\mathcalboondox{h}(p)}{\cP(X|Y)-p}\right].
\end{equation*}
By Lemma \ref{lemma_Perfect_Privacy_BIBO}, the above inequality becomes
\begin{eqnarray}
  \mathcalboondox{h}(\eps) &\geq& \mathcalboondox{h}(p)+\frac{q(\eps-p)}{\bar{\beta}p-\alpha\bar{p}} 1_{\{\alpha\bar{\alpha}\bar{p}^2<\beta\bar{\beta}p^2\}} \nonumber\\
  &&\quad~~~~ +\frac{\bar{q}(\eps-p)}{\bar{\alpha}\bar{p}-\beta p}1_{\{\alpha\bar{\alpha}\bar{p}^2\geq\beta\bar{\beta}p^2\}}.\label{eq:hLowerBound}
\end{eqnarray}
Since $\eps\mapsto\mathcalboondox{h}(\eps)$ is piecewise linear, its right derivative exists at $\eps=\cP(X|Y)$.
Using the geometric properties of $\H$ used to prove Theorem~\ref{Thm:PiecewiseLinearity}, we can show that
\begin{eqnarray*}
  \mathcalboondox{h}'(\cP(X|Y))&=& \frac{q}{\bar{\beta}p-\alpha\bar{p}} 1_{\{\alpha\bar{\alpha}\bar{p}^2<\beta\bar{\beta}p^2\}}\\ &&+\frac{\bar{q}}{\bar{\alpha}\bar{p}-\beta p}1_{\{\alpha\bar{\alpha}\bar{p}^2\geq\beta\bar{\beta}p^2\}},
\end{eqnarray*}
which is equal to the slope of the chord connecting  $(p,\mathcalboondox{h}(p))$ to $(\cP(X|Y),1)$ described in \eqref{eq:hLowerBound}. The concavity of $\mathcalboondox{h}(\cdot)$ thus implies that the inequality \eqref{eq:hLowerBound} is indeed equality.
%\begin{eqnarray}
%\mathcalboondox{h}(\eps) &=& \left[1-\zeta q + \frac{(\eps-p)q}{\bar{\beta}p-\alpha\bar{p}}\right]1_{\alpha\bar{\alpha}\bar{p}^2<\beta\bar{\beta}p^2} + \left[q+\frac{(\eps-p)\bar{q}}{\bar{\alpha}\bar{p}-\beta p}\right]1_{\alpha\bar{\alpha}\bar{p}^2\geq\beta\bar{\beta}p^2}\\
%&=& \begin{cases}1 - \zeta(\eps) q & \alpha\bar{\alpha}\bar{p}^2<\beta\bar{\beta}p^2,\\1-\tilde{\zeta}(\eps)\bar{q} & \alpha\bar{\alpha}\bar{p}^2\geq\beta\bar{\beta}p^2.\end{cases}
%\end{eqnarray}
\end{proof}
\begin{figure}[t]
\subfigure[$\alpha\bar{\alpha}\bar{p}^2<\beta\bar{\beta}p^2$]{
\begin{tikzpicture}[scale=0.71]
\node (a) [circle] at (0,0) {$1~$};
\node (b) [circle] at (0,1.4) {$0~~$};
\node (c) [circle] at (2,0) {$~1$};
\node (d) [circle] at (2,1.4) {$~0$};
\node (ee1) [circle] at (2.1,0.7) {\textcolor{blue}{$Y$}};
\node (ee2) [circle] at (-0.1,0.7) {\textcolor{blue}{$X$}};
\node (ee3) [circle] at (4.35,0.7) {\textcolor{blue}{$Z$}};
\draw[->] (0.15,0) -- (1.85,0) node[pos=.5,sloped,below] {};
\draw[->] (0.15,0) -- (1.85,1.4) node[pos=.35,sloped,below] {\footnotesize{$\beta$}};
\draw[->] (0.15,1.4) -- (1.85,0) node[pos=.35,sloped,above] {\footnotesize{$\alpha$}};
\draw[->] (0.15,1.4) -- (1.85,1.4) node[pos=.5,sloped,above] {};
\node (e) [circle] at (2.3,0) {};
\node (f) [circle] at (2.3,1.4) {};
\node (g) [circle] at (4.27,0) {$~1$};
\node (h) [circle] at (4.27,1.4) {$~0$};
\draw[->] (2.3,0) -- (4,0) node[pos=.5,sloped,below] {\footnotesize{1-$\zeta(\eps)$}};
\draw[->] (2.3,0) -- (4, 1.4) node[pos=.5,sloped,above] {\footnotesize{$\zeta(\eps)$}};
\draw[->] (2.3,1.4) -- (4,1.4) node[pos=.5,sloped,above] {};
%\draw[->] (2.3,1.4) -- (4, 0) node[pos=.5,sloped,above] {};
\end{tikzpicture}}
    %\caption{.}
    %\label{fig:f1}}
  \subfigure[$\alpha\bar{\alpha}\bar{p}^2\geq\beta\bar{\beta}p^2$]{
\begin{tikzpicture}[scale=0.71]
\vspace{-0.1cm}
        \node (a) [circle] at (0,0) {$1~$};
\node (b) [circle] at (0,1.4) {$0~~$};
\node (c) [circle] at (2,0) {$~1$};
\node (d) [circle] at (2,1.4) {$~0$};
\node (ee1) [circle] at (2.1,0.7) {\textcolor{blue}{$Y$}};
\node (ee2) [circle] at (-0.1,0.7) {\textcolor{blue}{$X$}};
\node (ee3) [circle] at (4.35,0.7) {\textcolor{blue}{$Z$}};
\draw[->] (0.15,0) -- (1.85,0) node[pos=.5,sloped,below] {};
\draw[->] (0.15,0) -- (1.85,1.4) node[pos=.35,sloped,below] {\footnotesize{$\beta$}};
\draw[->] (0.15,1.4) -- (1.85,0) node[pos=.35,sloped,above] {\footnotesize{$\alpha$}};
\draw[->] (0.15,1.4) -- (1.85,1.4) node[pos=.5,sloped,above] {};
\node (e) [circle] at (2.3,0) {};
\node (f) [circle] at (2.3,1.4) {};
\node (g) [circle] at (4.27,0) {$~1$};
\node (h) [circle] at (4.27,1.4) {$~0$};
\draw[->] (2.3,0) -- (4,0) node[pos=.5,sloped,below] {};
%\draw[->] (2.3,0) -- (4, 1.4) node[pos=.25,sloped,above]{}; %{\footnotesize{$0.5\tilde{\zeta}(\eps)$}};
\draw[->] (2.3,1.4) -- (4,1.4) node[pos=.5,sloped,above] {\footnotesize{$1-\tilde{\zeta}(\eps)$}};
\draw[->] (2.3,1.4) -- (4, 0) node[pos=.5,sloped,above]{\footnotesize{$\tilde{\zeta}(\eps)$}}; %{\footnotesize{$0.5\tilde{\zeta}(\eps)$}};
\end{tikzpicture}}
    %\caption{}
  \caption{The optimal privacy filters for $P_{Y|X}=\mathsf{BIBO(\alpha, \beta)}$.}
  \label{fig:Optimal_Filter_BIBO}
\end{figure}
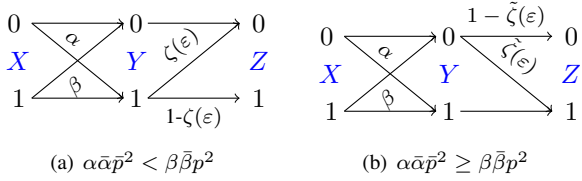
Under the hypotheses of the previous theorem, for every $\eps\in[\cP(X),\cP(X|Y)]$ there exists a Z-channel that achieves $\mathcalboondox{h}(\eps)$. It can be shown that  Z-channel is the \emph{only} binary filter with this property. It is also worth mentioning that even if $P_{Y|X}$ is symmetric (i.e., $\alpha=\beta$), the optimal filter cannot be symmetric, unless $X$ is uniform, in which case $\mathsf{BSC}(0.5\zeta(\eps))$ is also optimal.
\section{I.I.D. Binary Symmetric Vector Case}
We next study privacy aware guessing for a pair of binary random \emph{vectors} $(X^n, Y^n)$ with  $X^n, Y^n\in\{0, 1\}^n$. Recall that in this case it is sufficient to consider auxiliary random variables having supports of cardinality $2^n+1$. However, this condition may be practically inconvenient. Moreover,  in the scalar binary case examined in the last section we observed that a binary $Z$ was sufficient to achieve $\mathcalboondox{h}(\eps)$. Hence, it is natural to  require the privacy filters to produce also binary random vectors, i.e., $Z^n\in\{0,1\}^n$, which leads to the following definition. 
Recall that the data processing inequality implies that $\cP(X^n)\leq\cP(X^n|Z^n)\leq \cP(X^n|Y^n)$ and hence we can assume $\cP(X^n)\leq\eps^n\leq \cP(X^n|Y^n)$.
\begin{definition}\label{Def_h_n}
For a given pair of binary random vectors $(X^n, Y^n)$, we define  $\underbar{h}_n(\eps)$ for $\eps\in[\cP^{1/n}(X^n),\cP^{1/n}(X^n|Y^n)]$, as
\begin{equation}
\underbar{h}_n(\eps) \coloneqq \max~\cP^{1/n}(Y^n | Z^n), \end{equation}
where the maximum is taken over all (not necessarily memoryless) channels $P_{Z^n|Y^n}$ such that $Z^n\in\{0,1\}^n$, $X^n\markov Y^n\markov Z^n$, and $\cP(X^n|Z^n) \leq \eps^n$.
\end{definition}
Note that this definition does not make any assumption about the privacy filters $P_{Z^n|Y^n}$ except that $Z^n\in\{0,1\}^n$.
From an implementation point of view, the simplest privacy filter is a memoryless one such that  $Z_k$ is a noisy version of $Y_k$ for $k=1,\dots,n$. More precisely, we are interested in a \textit{single} $\mathsf{BIBO}$ channel $P_{Z|Y}$ which, given $Y_k$, generates $Z_k$ according to  $$P_{Z^n|Y^n}(z^n|y^n)=\prod_{k=1}^nP_{Z|Y}(z_k|y_k).$$  
Now, let $h_n^\mathsf{i}(\eps)$ be defined as $\max \cP^{1/n}(Y^n|Z^n)$, where the maximum is taken over  such memoryless privacy filters satisfying $\cP(X^n|Z^n)\leq \eps^n$.
%\begin{definition}\label{Def_h_iid}
%For a given pair of binary random vectors $(X^n, Y^n)$, the function $h_n^\mathsf{i}$ is defined, for $\eps\in[\cP^{1/n}(X^n),\cP^{1/n}(X^n|Y^n)]$, as
%\begin{equation}
%h_n^\mathsf{i}(\eps)\coloneqq\max \{\cP^{1/n}(Y^n|Z_\gamma^n) : P_{Z^n|Y^n}=\prod_{i=1}^nP^{(i)}_{Z_i|Y_i}, \cP(X^n|Z_\gamma^n) \leq \eps^n\},
%\end{equation}
%where $Z_{\gamma,k}\coloneqq Y_kW_k$ for all $1\leq k\leq n$ with $\repdc{W}{1}{n}$ i.i.d. $\sBer(\gamma)$ independent of $(X^n,Y^n)$.
%\end{definition}
Let $\oplus$ denote mod 2 addition. In what follows, we study $\underbar{\emph{h}}_n$ and $h_n^\mathsf{i}$ for the following setup:
\begin{itemize}
	\item[$\textnormal{a}$)] $\repdc{X}{1}{n}$ are i.i.d. $\sBer(p)$ random variables with $p\geq \frac{1}{2}$,
	%\item[$\textnormal{a}_2$)] $X_1\sim \sBer(p)$ and $X_k=X_{k-1}\oplus U_k$ for $k\in\{2,\dots,n\}$ where $\repdc{U}{2}{n}$ are i.i.d. $\sBer(r)$ random variables independent of $X_1$.
\item[b)] $Y_k=X_k\oplus V_k$ for $k=1, \dots, n$, where $\repdc{V}{1}{n}$ are i.i.d. $\sBer(\alpha)$ random variables independent of $X^n$, such that $\alpha< \frac{1}{2}$.
\end{itemize}
%\begin{definition}
%  For a given pair of random vector $(X^n, Y^n)$, the privacy-constrained guessing probability  is defined as
%  \begin{equation}\label{Def_h_n}
%  \underbar{h}_n(\eps) \coloneqq \sup_{X^n\markov Y^n\markov Z^n, \atop \cP(X^n|Z^n) \leq \eps^n} \cP^{1/n}(Y^n | Z^n).
%\end{equation}
%\end{definition}
%\begin{equation}\label{Def_h_iid}
%  h_n^\mathsf{i}(\eps) = \sup_{\cP(X^n|Z^n) \leq \eps^n} \cP^{1/n}(Y^n|Z^n),
%\end{equation}
%where $Z_k=Y_k\oplus W'_k$ for all $1\leq k\leq n$, where $\repdc{W'}{1}{n}$ are i.i.d.\ $\sBer(\gamma)$ random variables with some $\gamma\geq0$ which are independent of $(X^n, Y^n)$, and $\oplus$ denotes mod 2 addition.
\begin{figure}[t]
	\centering
	\begin{tikzpicture}[scale=0.7]
	\node (a) [circle] at (0,0) {$11~~$};
	\node (b) [circle] at (0,1.8) {$01~~$};
	\node (c) [circle] at (2,0) {$~~~~~~~11~~~~~~$};
	\node (d) [circle] at (2,1.8) {$~~~~~~~01~~~~~~$};
	\node (a1) [circle] at (0,0.9) {$10~~$};
	\node (b1) [circle] at (2,0.9) {$~~~~~~~10~~~~~~$};
	\node (c1) [circle] at (2,2.7) {$~~~~~~~00~~~~~~$};
	\node (d1) [circle] at (0,2.7) {$00~~$};
	%\node (ee1) [circle] at (2.1,0.7) {\textcolor{blue}{$Y$}};
	%\node (ee2) [circle] at (-0.1,0.7) {\textcolor{blue}{$X$}};
	%\node (ee3) [circle] at (4.35,0.7) {\textcolor{blue}{$Z$}};
	\draw[->] (0.15,0) -- (1.83,0) node[pos=.5,sloped,below] {};
	\draw[->] (0.15,0) -- (1.83,0.8) node[pos=.5,sloped,below] {};
	\draw[->] (0.15,0) -- (1.83,1.7) node[pos=.5,sloped,below] {};
	\draw[->] (0.15,0) -- (1.83,2.6) node[pos=.5,sloped,below] {};
	\draw[->] (0.15,1.7) -- (1.83,0) node[pos=.5,sloped,below] {};
	\draw[->] (0.15,0.8) -- (1.83,0.8) node[pos=.5,sloped,below] {};
	\draw[->] (0.15,1.7) -- (1.83,1.7) node[pos=.5,sloped,below] {};
	\draw[->] (0.15,2.6) -- (1.83,2.6) node[pos=.5,sloped,below] {};
	\draw[->] (0.15,2.6) -- (1.83,0) node[pos=.5,sloped,below] {};
	\draw[->] (0.15,1.7) -- (1.83,0.8) node[pos=.5,sloped,below] {};
	\draw[->] (0.15,0.8) -- (1.83,1.7) node[pos=.5,sloped,below] {};
	\draw[->] (0.15,0.8) -- (1.83,2.6) node[pos=.5,sloped,below] {};
	\draw[->] (0.15,0.8) -- (1.83,0) node[pos=.5,sloped,below] {};
	\draw[->] (0.15,2.6) -- (1.83,0.8) node[pos=.5,sloped,below] {};
	\draw[->] (0.15,2.6) -- (1.83,1.7) node[pos=.5,sloped,below] {};
	\draw[->] (0.15,1.7) -- (1.83,2.6) node[pos=.5,sloped,below] {};
	%\draw[->] (0.15,0) -- (1.85,1.4) node[pos=.35,sloped,below] {\footnotesize{$\beta$}};
	%\draw[->] (0.15,1.4) -- (1.85,0) node[pos=.35,sloped,above] {\footnotesize{$\alpha$}};
	%\draw[->] (0.15,1.4) -- (1.85,1.4) node[pos=.5,sloped,above] {};
	%\node (e) [circle] at (2.3,0) {};
	%\node (f) [circle] at (2.3,1.4) {};
	\node (g) [circle] at (4.47,0) {$~~~11~~~$};
	\node (h) [circle] at (4.47,0.9) {$~~~10~~$};
	\node (g1) [circle] at (4.47,1.8) {$~~~01~~$};
	\node (h1) [circle] at (4.47,2.6) {$~~~00~~$};
	\draw[->] (2.35,0) -- (4.2,2.6) node[pos=.65,sloped,below] {};
	\draw[->] (2.35,0) -- (4.2, 0) node[pos=.55,sloped,below] {\footnotesize{1-$\zeta_2(\eps)$}};
	\draw[->] (2.35,0.8) -- (4.2,0.8) node[pos=.5,sloped,above] {};
	\draw[->] (2.35,1.7) -- (4.2,1.7) node[pos=.5,sloped,above] {};
	\draw[->] (2.35,2.6) -- (4.2,2.6) node[pos=.5,sloped,above] {};
	%\draw[->] (2.3,1.4) -- (4, 0) node[pos=.5,sloped,above] {};
	\end{tikzpicture}
	\caption{The optimal privacy filter for $\underbar{\emph{h}}_2(\eps)$ for $\eps\in[\eps_\mathsf{L}, \bar{\alpha})$, where $\zeta_2(\eps)$ is defined in \eqref{Zeta_n_vector}.  }\label{fig:Optimal_Vector}
\end{figure}
 %The results in the last section imply that%\footnote{The concavity of $\eps\mapsto \underbar{\emph{h}}(\eps)$ follows from the same observation as that of $\eps\mapsto \mathcalboondox{h}(\eps)$.}
%$\eps^n\mapsto \underbar{\emph{h}}_n^n(\eps)$ is concave and piecewise linear for any positive $n\in \mathbb{N}$.
%also $\eps^n\mapsto \underbar{\emph{h}}^n_n(\eps)$ and \textcolor{red}{$\eps\mapsto h_n^\mathsf{i}(\eps)$ are piecewise linear for any positive $n\in \mathbb{N}$}.
We first determine $h_n^\mathsf{i}(\eps)$ for this model and show that (as expected) $h_n^\mathsf{i}(\eps)$ is independent of $n$. According to this model, $\cP(X^n)=p^n$ and $\cP(X^n|Y^n)=\bar{\alpha}^n$, and thus $p\leq\eps\leq \bar{\alpha}$.
\begin{proposition}\label{Propo_h_iid}
If $(X^n, Y^n)$ satisfies a) and b) with $p\in [\frac{1}{2}, 1)$ and $\alpha\in [0, \frac{1}{2})$ such that $\bar{\alpha}>p$, then
\begin{equation*}h_n^\mathsf{i}(\eps) = \mathcalboondox{h}(\eps)=1-\zeta(\eps)q,\end{equation*}
for all $\eps\in[p,\bar{\alpha}]$, where $\zeta(\eps)$ is given in \eqref{zeta} and $q=\alpha\bar{p}+\bar{\alpha}p$.
\end{proposition}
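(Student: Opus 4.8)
The plan is to reduce this $n$-letter problem with memoryless filters to the scalar binary problem already settled in Theorem~\ref{Theorem_Linearity_BIBO}. First I would exploit the product structure of the model: since $X_1,\dots,X_n$ are i.i.d.\ and $Z^n$ is produced from $Y^n$ by a \emph{single} $\mathsf{BIBO}$ channel applied coordinatewise, the triples $(X_k,Y_k,Z_k)$, $k=1,\dots,n$, are i.i.d.; in particular $\{(X_k,Z_k)\}_{k=1}^n$ and $\{(Y_k,Z_k)\}_{k=1}^n$ are i.i.d. Because the joint law then factorizes and the maximum of a product of nonnegative numbers equals the product of the maxima,
\[
\cP(X^n|Z^n)=\sum_{z^n}\max_{x^n}\prod_{k=1}^{n}P_{XZ}(x_k,z_k)=\prod_{k=1}^{n}\Bigl(\sum_{z}\max_{x}P_{XZ}(x,z)\Bigr)=\cP(X|Z)^n,
\]
and similarly $\cP(Y^n|Z^n)=\cP(Y|Z)^n$. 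Hence the privacy constraint $\cP(X^n|Z^n)\le\eps^n$ is equivalent to $\cP(X|Z)\le\eps$, the objective reads $\cP^{1/n}(Y^n|Z^n)=\cP(Y|Z)$, and therefore
\[
h_n^{\mathsf i}(\eps)=\max\bigl\{\cP(Y|Z):\ P_{Z|Y}\text{ a }\mathsf{BIBO}\text{ channel},\ \cP(X|Z)\le\eps\bigr\},
\]
which in particular does not depend on $n$.

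Next I would sandwich $h_n^{\mathsf i}(\eps)$ between two copies of $\mathcalboondox h(\eps)$. The inequality $h_n^{\mathsf i}(\eps)\le\mathcalboondox h(\eps)$ is immediate from \eqref{Def_h_eps}, because a scalar $\mathsf{BIBO}$ filter is one admissible choice of $P_{Z|Y}$ there. For the reverse inequality, note that here $P_{Y|X}=\mathsf{BIBO}(\alpha,\alpha)=\mathsf{BSC}(\alpha)$, that the hypothesis $\bar\alpha>p$ is exactly the hypothesis $\bar\alpha\bar p>\alpha p$ of Theorem~\ref{Theorem_Linearity_BIBO} (indeed $\bar\alpha\bar p>\alpha p\iff\bar\alpha>p(\alpha+\bar\alpha)=p$), and that the range $[p,\bar\alpha\bar p+\bar\alpha p]=[p,\bar\alpha]$ of that theorem coincides with $[\cP^{1/n}(X^n),\cP^{1/n}(X^n|Y^n)]=[p,\bar\alpha]$. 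For $p\in(\tfrac12,1)$ one has $\alpha\bar\alpha\bar p^2<\alpha\bar\alpha p^2$, so the first case of Theorem~\ref{Theorem_Linearity_BIBO} applies and gives $\mathcalboondox h(\eps)=1-\zeta(\eps)q$, with $\zeta(\eps)$ the $\beta=\alpha$ specialization of \eqref{zeta} and $q=\alpha\bar p+\bar\alpha p$, this value being attained by the Z-channel $\mathsf Z(\zeta(\eps))$. For $p=\tfrac12$ the second case applies instead, but one checks directly that there $\zeta(\eps)=\tilde\zeta(\eps)$ and $q=\bar q=\tfrac12$, so the value is again $1-\zeta(\eps)q$ and is attained by a (reverse) Z-channel. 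In all cases $\mathcalboondox h(\eps)$ is achieved by a single $\mathsf{BIBO}$ channel $G$; applying $G$ memorylessly produces a filter feasible for $h_n^{\mathsf i}(\eps)$ with value $\mathcalboondox h(\eps)$, whence $h_n^{\mathsf i}(\eps)\ge\mathcalboondox h(\eps)$. Combining the two inequalities yields $h_n^{\mathsf i}(\eps)=\mathcalboondox h(\eps)=1-\zeta(\eps)q$ for all $\eps\in[p,\bar\alpha]$.

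The only step deserving real care is the factorization $\cP(X^n|Z^n)=\cP(X|Z)^n$: it amounts to saying that the block MAP rule decomposes into coordinatewise MAP rules, and this uses both the independence of the coordinates \emph{and} the memorylessness of the filter. Everything else is bookkeeping plus a direct appeal to Theorem~\ref{Theorem_Linearity_BIBO} (the endpoint $\eps=p$ also being consistent with Lemma~\ref{lemma_Perfect_Privacy_BIBO}, whose value equals $1-\zeta(p)q$).
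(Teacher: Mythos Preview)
Your proof is correct and follows what is essentially the paper's intended route: the paper omits the proof but signals the approach by noting that the proposition ``reduces to Theorem~\ref{Theorem_Linearity_BIBO} for $n=1$,'' and your argument carries out precisely that reduction---first collapsing the $n$-letter quantities $\cP(X^n|Z^n)$ and $\cP(Y^n|Z^n)$ to $n$th powers of the scalar ones via the i.i.d.\ factorization, then sandwiching $h_n^{\mathsf i}(\eps)$ between $\mathcalboondox h(\eps)$ (upper bound by enlarging the feasible set) and $\mathcalboondox h(\eps)$ again (lower bound because Theorem~\ref{Theorem_Linearity_BIBO} exhibits a binary-output optimizer). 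The case split at $p=\tfrac12$ and the verification that $\zeta(\eps)=\tilde\zeta(\eps)$, $q=\bar q$ there are handled cleanly.
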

Note that the proposition reduces to Theorem~\ref{Theorem_Linearity_BIBO} for $n=1$. However, for $n\geq 2$, we have $h_n^\mathsf{i}(\eps)<\underbar{\emph{h}}_n(\eps)\leq \mathcalboondox{h}(P_{X^nY^n}, \eps),$ as implied by the following theorem. A channel $\mathsf{W}$ is said to be  a $2^n$-ary Z-channel, denoted by $\mathsf{Z}_n(\gamma)$, if the input and output alphabets are $\{0, 1\}^n$ and $\mathsf{W}(a|a)=1$ for $a\neq \bf{1}$, $\mathsf{W}(\bf{0}|\bf{1})=\gamma$, and $\mathsf{W}(\bf{1}|\bf{1})=\bar{\gamma}$, where ${\bf 0}=(0, 0, \dots, 0)$ and ${\bf 1}=(1, 1, \dots, 1)$.
\begin{theorem}
\label{Thm:Difference}
Assume that $(X^n,Y^n)$ satisfies a) and b) with $p\in[\frac{1}{2},1)$ and $\alpha\in [0,\frac{1}{2})$ such that $\bar{\alpha}>p$. Then, there exists $p\leq\eps_\mathsf{L}<\bar{\alpha}$  such that
$$\underbar{h}^n_n(\eps)= 1-\zeta_n(\eps)q^n,$$
for $\eps\in [\eps_\mathsf{L}, \bar{\alpha}]$, where $q=\alpha\bar{p}+\bar{\alpha}p$ and 
\begin{equation}\label{Zeta_n_vector}
  \zeta_n(\eps)\coloneqq \frac{\bar{\alpha}^n-\eps^n}{(\bar{\alpha}p)^n-(\alpha\bar{p})^n}.
\end{equation}
%$$\underbar{h}_n(\eps)-h^\mathsf{i}_n(\eps)\leq \left[1-\frac{\bar{\alpha}q}{\bar{\alpha}p-\alpha\bar{p}}\right]^n\left[\frac{\kappa_n^{n-1}-\kappa_n^{n}}{(\kappa_n-1)^n}\right]+\left[1-\frac{\bar{\alpha}^nq^n}{(\bar{\alpha}p)^n-(\alpha\bar{p})^n}\right],$$
Moreover, the channel $\mathsf{Z}_n(\zeta_n(\eps))$ achieves $\underbar{h}_n(\eps)$ in this interval (see Fig.~\ref{fig:Optimal_Vector} for the case $n=2$).
\end{theorem}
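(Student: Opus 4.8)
The plan is to follow the same two-sided strategy used in the proof of Theorem~\ref{Theorem_Linearity_BIBO}, but now executed at the level of the $2^n$-ary alphabet $\{0,1\}^n$ rather than the binary one. First I would establish the lower bound by exhibiting the explicit filter $\mathsf{Z}_n(\gamma)$ and computing its privacy and utility. For the $2^n$-ary Z-channel with parameter $\gamma$, the output $Z^n$ equals $Y^n$ except when $Y^n = \mathbf{1}$, in which case $Z^n = \mathbf{0}$ with probability $\gamma$. Using a) and b), the joint law of $(X^n, Y^n)$ is a product of $\mathsf{BIBO}$-type (here $\mathsf{BSC}(\alpha)$) kernels with $\sBer(p)$ input, so $\cP(X^n) = p^n$ and $\cP(X^n|Y^n) = \bar\alpha^{\,n}$. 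A direct computation of $\cP(X^n|Z^n)$ for $Z^n$ produced by $\mathsf{Z}_n(\gamma)$ should give an affine function of $\gamma$, from which one solves $\cP(X^n|Z^n) = \eps^n$ to obtain exactly $\gamma = \zeta_n(\eps)$ as in \eqref{Zeta_n_vector}; the corresponding utility $\cP(Y^n|Z^n)$ then works out to $1 - \zeta_n(\eps)q^n$ with $q = \alpha\bar p + \bar\alpha p$. One must check that $\zeta_n(\eps) \in [0,1]$ precisely for $\eps$ in an interval $[\eps_{\mathsf L}, \bar\alpha]$ with $\eps_{\mathsf L} \geq p$; this pins down $\eps_{\mathsf L}$ (it is where $\zeta_n$ hits $1$, equivalently where $\cP(X^n|Z^n)$ would drop to $\cP(X^n) = p^n$).

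For the upper bound I would argue that no binary-valued filter (memoryless or not) can beat $1 - \zeta_n(\eps)q^n$ on $[\eps_{\mathsf L}, \bar\alpha]$. The natural route is convexity/concavity together with a derivative computation at the endpoint $\eps = \bar\alpha = \cP^{1/n}(X^n|Y^n)$, mirroring the scalar proof sketch. Concretely: since $Z^n$ ranges over $\{0,1\}^n$, the set of achievable pairs $(\cP(X^n|Z^n), \cP(Y^n|Z^n))$ is still convex (same argument as for $\mathcal R$ in the scalar case, applied with $\X,\Y$ replaced by the $2^n$-ary alphabets and filters restricted to range $\{0,1\}^n$), so $\eps \mapsto \underbar{h}_n(\eps)$, or rather $\eps^n \mapsto \underbar{h}_n^n(\eps)$ in the natural parametrization, is concave. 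Its graph therefore lies above the chord from $(p^n, \underbar{h}_n^n(p))$ to $(\bar\alpha^{\,n}, 1)$; more usefully, on the last piece it lies above the tangent line at the right endpoint. I would compute the family of optimal binary filters at $\eps = \bar\alpha$ — intuitively the identity-like filters — and then use the geometric machinery behind Theorem~\ref{Thm:PiecewiseLinearity} (the fact that $\H$ is linear along every feasible direction $D \in \D(G)$ near any $G$) to read off the right derivative of $\underbar{h}_n^n$ at $\bar\alpha$. This derivative should equal the slope of the line $\eps^n \mapsto 1 - \zeta_n(\eps)q^n$, i.e. $q^n / ((\bar\alpha p)^n - (\alpha\bar p)^n)$. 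Concavity then forces equality between $\underbar{h}_n^n$ and this affine function on $[\eps_{\mathsf L}^n, \bar\alpha^{\,n}]$, and uniqueness of the maximizer follows from strict concavity of $F \mapsto \U(\sP, F)$ off the relevant face.

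The main obstacle, I expect, is the upper-bound direction — specifically, transporting the ``linear along feasible directions'' argument of Theorem~\ref{Thm:PiecewiseLinearity} to this setting and correctly identifying the optimal filters at $\eps = \bar\alpha$ together with the cone $\D(G)$ of admissible perturbations. Unlike the binary scalar case, where there are only finitely many candidate filter shapes and the Z-channel is forced, here the filter is a $2^n \times 2^n$ stochastic matrix constrained to have binary output, and one must verify that among all feasible perturbation directions at the optimal endpoint filter, none increases utility faster than the rate achieved along the direction toward $\mathsf{Z}_n$. Equivalently, one must show that the tangent cone to the privacy constraint at the endpoint, intersected with $\D(G)$, does not contain a direction of strictly larger utility slope. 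A secondary technical point is verifying that the restriction to \emph{binary-output} filters does not cost anything on this interval — that the unconstrained optimum $\mathcalboondox{h}(P_{X^nY^n},\eps)$ is in fact already attained by $\mathsf{Z}_n(\zeta_n(\eps))$, which is binary-valued — so that $\underbar{h}_n(\eps) = \mathcalboondox{h}(P_{X^nY^n},\eps)$ on $[\eps_{\mathsf L},\bar\alpha]$; this should follow from the same endpoint-derivative comparison carried out for the full $(2^n+1)$-ary problem. The calculation of $\cP(X^n|Z^n)$ and $\cP(Y^n|Z^n)$ for $\mathsf{Z}_n(\gamma)$, while needed to get the precise constants, is routine and I would not belabor it.
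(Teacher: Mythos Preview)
The paper omits the proof of this theorem entirely (deferring to the extended version), so there is no in-paper argument to compare against directly. That said, your plan is exactly the natural lift of the proof sketch given for Theorem~\ref{Theorem_Linearity_BIBO}: realize the lower bound by the explicit $2^n$-ary Z-channel, then pin down the upper bound by computing the right derivative of the concave, piecewise-linear curve at the right endpoint $\eps=\bar\alpha$ via the geometry behind Theorem~\ref{Thm:PiecewiseLinearity}. This is almost certainly the intended route, and your identification of the endpoint-perturbation analysis as the main technical obstacle is accurate.

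Two corrections are worth making. First, your parenthetical that $\eps_{\mathsf L}$ is ``equivalently where $\cP(X^n|Z^n)$ would drop to $\cP(X^n)=p^n$'' is not right: $\eps_{\mathsf L}$ is the point where $\zeta_n$ hits $1$, i.e.\ $\eps_{\mathsf L}^n=\bar\alpha^{\,n}-(\bar\alpha p)^n+(\alpha\bar p)^n$, and for $n\geq 2$ this is strictly larger than $p^n$ in general. The $\mathsf{Z}_n$ family simply does not reach perfect privacy, which is precisely why the theorem only asserts the formula on $[\eps_{\mathsf L},\bar\alpha]$ with $\eps_{\mathsf L}\geq p$ rather than on all of $[p,\bar\alpha]$.

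Second, your appeal to convexity of the achievable region \emph{with outputs restricted to $\{0,1\}^n$} has a gap: the usual time-sharing argument that makes $\mathcal R$ convex requires tagging the output with which filter was used, which enlarges the output alphabet beyond $2^n$. The cleaner route, which you relegate to a ``secondary technical point,'' is really the primary one: carry out the concavity and endpoint-derivative argument for the unrestricted function $\mathcalboondox{h}(P_{X^nY^n},\cdot)$ (where Theorem~\ref{Thm:PiecewiseLinearity} and the convexity of $\mathcal R$ apply verbatim with $\X,\Y$ replaced by $\{0,1\}^n$), conclude that $\mathcalboondox{h}(P_{X^nY^n},\eps^n)=1-\zeta_n(\eps)q^n$ on the last linear piece, and then observe that $\mathsf{Z}_n(\zeta_n(\eps))$ is binary-valued and attains it, so $\underbar{h}_n^n(\eps)=\mathcalboondox{h}(P_{X^nY^n},\eps^n)$ there.
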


The memoryless privacy filter assumed in $h_n^\mathsf{i}(\eps)$ is simple to implement. However, it is clear from Theorem~\ref{Thm:Difference} that this simple filter is not optimal even when $(X^n, Y^n)$ is i.i.d.\ since $\underbar{\emph{h}}_n(\eps)$ is a function of $n$, while $h_n^\mathsf{i}(\eps)$ is not. The following corollary bounds the loss resulting from using a simple memoryless filter instead of an optimal one for $\eps\in [\eps_\mathsf{L}, \bar{\alpha}]$. Clearly, for $n=1$, there is no gap as $\underbar{\emph{h}}_1(\eps)=h^\mathsf{i}_1(\eps)$.
\begin{corollary}\label{Thm:Diference_Gap}
Let $(X^n,Y^n)$ satisfy a) and b) with $p\in[\frac{1}{2},1)$ 
and $\alpha\in [0,\frac{1}{2})$ such that $\bar{\alpha}>p$. If $p>\frac{1}{2}$ and $\alpha>0$, then for $\eps\in [\eps_\mathsf{L}, \bar{\alpha}]$ and sufficiently large $n$
\begin{equation}\label{LB_Gap}
\underbar{h}_n(\eps)-h^\mathsf{i}_n(\eps)\geq (\bar{\alpha}-\eps)[\Phi(1)-\Phi(n)],
\end{equation}
where $$\Phi(n)\coloneqq \frac{q^{n}\bar{\alpha}^{n-1}}{(\bar{\alpha}p)^n-(\alpha\bar{p})^n}.$$
If $p=\frac{1}{2}$, then
\begin{equation}\label{UB_Gap}
 h^\mathsf{i}_n(\eps) \leq \underbar{h}_n(\eps)\leq h^\mathsf{i}_n(\eps) + \frac{\alpha}{2\bar{\alpha}}, 
\end{equation}
for every $n\geq 1$ and $\eps\in [\eps_\mathsf{L}, \bar{\alpha}]$.
\end{corollary}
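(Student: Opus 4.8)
The plan is to reduce the bound \eqref{LB_Gap} to a statement about the local shape of $\underbar{h}_n$ at the right endpoint $\bar{\alpha}$ of its domain. First I would rewrite $h^{\mathsf{i}}_n$ in a form matching $\Phi$: Proposition~\ref{Propo_h_iid} gives $h^{\mathsf{i}}_n(\eps)=\mathcalboondox{h}(\eps)=1-\zeta(\eps)q$, and since $\bar{\alpha}\bar{p}+\bar{\alpha}p=\bar{\alpha}$ we have $\zeta(\eps)=(\bar{\alpha}-\eps)/(\bar{\alpha}p-\alpha\bar{p})$, so $\zeta(\eps)q=(\bar{\alpha}-\eps)\Phi(1)$ and $h^{\mathsf{i}}_n(\eps)=1-(\bar{\alpha}-\eps)\Phi(1)$. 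Substituting this, \eqref{LB_Gap} becomes equivalent to $\underbar{h}_n(\eps)\geq 1-(\bar{\alpha}-\eps)\Phi(n)$, i.e.\ to the assertion that on $[\eps_{\mathsf{L}},\bar{\alpha}]$ the graph of $\underbar{h}_n$ lies above its tangent line at $\bar{\alpha}$; note that Theorem~\ref{Thm:Difference} gives $\zeta_n(\bar{\alpha})=0$, hence $\underbar{h}_n(\bar{\alpha})=1$, and a short computation yields $\underbar{h}_n'(\bar{\alpha})=\Phi(n)$.

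The crux is thus to show that $\underbar{h}_n$ is convex on $[\eps_{\mathsf{L}},\bar{\alpha}]$ once $n$ is large. Using Theorem~\ref{Thm:Difference} I would write $\underbar{h}_n(\eps)=(B+A\eps^n)^{1/n}$ with $A:=q^n/((\bar{\alpha}p)^n-(\alpha\bar{p})^n)>0$ and $B:=1-A\bar{\alpha}^n$ (here $B+A\eps^n=\underbar{h}_n(\eps)^n>0$ on the interval), and differentiate twice: a short calculation gives $\underbar{h}_n''(\eps)=A(n-1)B\,\eps^{n-2}(B+A\eps^n)^{(1-2n)/n}$, whose sign is that of $B$. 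Writing $c:=q/p$ and $r:=\alpha\bar{p}/(\bar{\alpha}p)$, one finds $B=1-c^n/(1-r^n)$, and the hypotheses $p>\tfrac12$, $0<\alpha<\tfrac12$ force $c,r\in(0,1)$, so $B\to1$; hence $B>0$ for every $n$ past a threshold depending only on $p,\alpha$ and \emph{not} on $\eps$. For such $n$, convexity plus the tangent inequality at $\bar{\alpha}$ gives $\underbar{h}_n(\eps)\geq 1-(\bar{\alpha}-\eps)\Phi(n)$, which is \eqref{LB_Gap}.

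For the case $p=\tfrac12$ I would argue directly from explicit formulas: here $q=\tfrac12$, and Theorem~\ref{Thm:Difference} and Proposition~\ref{Propo_h_iid} specialize to $\underbar{h}_n(\eps)=\big((\eps^n-\alpha^n)/(\bar{\alpha}^n-\alpha^n)\big)^{1/n}$ and $h^{\mathsf{i}}_n(\eps)=(\eps-\alpha)/(\bar{\alpha}-\alpha)$. The left inequality in \eqref{UB_Gap} is immediate because a memoryless binary filter is admissible in Definition~\ref{Def_h_n}. For the right one I would note that $\eps\leq\bar{\alpha}$ forces $\underbar{h}_n^n(\eps)\leq(\eps/\bar{\alpha})^n$, hence $\underbar{h}_n(\eps)\leq\eps/\bar{\alpha}$; a one-line computation gives $\eps/\bar{\alpha}-h^{\mathsf{i}}_n(\eps)=\alpha(\bar{\alpha}-\eps)/(\bar{\alpha}(\bar{\alpha}-\alpha))$; and $\eps\geq\eps_{\mathsf{L}}\geq p=\tfrac12=(\bar{\alpha}+\alpha)/2$ forces $\bar{\alpha}-\eps\leq(\bar{\alpha}-\alpha)/2$, which bounds that difference by $\alpha/(2\bar{\alpha})$.

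The step I expect to be the main obstacle is making the threshold on $n$ in the first part uniform over all $\eps\in[\eps_{\mathsf{L}},\bar{\alpha}]$ --- the delicate regime being $\eps$ close to $\bar{\alpha}$, where $\underbar{h}_n$ is nearly affine and \eqref{LB_Gap} is tightest. The convexity route resolves this neatly since the $\eps$-dependence drops out once everything is reduced to the sign of $B$, and the positivity of $B$ for large $n$ is exactly where the hypotheses $p>\tfrac12$ and $\alpha>0$ (equivalently $q<p$) are used. Should convexity be awkward to verify, a fallback is to prove $(1-t)^{1/n}\geq1-s$ directly with $t=\zeta_n(\eps)q^n$ and $s=(\bar{\alpha}-\eps)\Phi(n)$, using $t/s=\sum_{j=0}^{n-1}(\eps/\bar{\alpha})^j\in[1,n]$ together with $(1-s)^n\leq e^{-ns}\leq1-ns+\tfrac12(ns)^2$; after cancelling a common factor this reduces, again, to an $\eps$-free condition on $n$.
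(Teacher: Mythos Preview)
The paper does not include a proof of this corollary; it explicitly omits most proofs due to space limitations and defers them to \cite{Asoode_H_EPSILON}. Consequently there is nothing in the text to compare your argument against, and I can only assess correctness.

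Your argument is sound. For \eqref{LB_Gap}, the reduction of $h^{\mathsf i}_n$ to $1-(\bar\alpha-\eps)\Phi(1)$ via $\zeta(\eps)=(\bar\alpha-\eps)/(\bar\alpha p-\alpha\bar p)$ is correct, and the reformulation of the target inequality as ``$\underbar{h}_n$ lies above its tangent line at $\bar\alpha$'' is exactly right since $\underbar{h}_n(\bar\alpha)=1$ and $\underbar{h}_n'(\bar\alpha)=\Phi(n)$. Your second-derivative computation for $f(\eps)=(B+A\eps^n)^{1/n}$ is accurate, and the key observation that the sign of $f''$ equals the sign of $B=1-c^n/(1-r^n)$ with $c=q/p$ and $r=\alpha\bar p/(\bar\alpha p)$ is the heart of the matter. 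The hypothesis $p>\tfrac12$ is precisely what forces $c<1$ (since $q<p\iff \alpha\bar p<\alpha p\iff p>\tfrac12$), so $B\to1$ and the threshold on $n$ is indeed $\eps$-free; your worry about uniformity in $\eps$ is therefore already dispatched by the convexity route.

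For \eqref{UB_Gap}, the specializations $\underbar{h}_n(\eps)=\big((\eps^n-\alpha^n)/(\bar\alpha^n-\alpha^n)\big)^{1/n}$ and $h^{\mathsf i}_n(\eps)=(\eps-\alpha)/(\bar\alpha-\alpha)$ are correct, the bound $\underbar{h}_n(\eps)\le\eps/\bar\alpha$ follows from $\eps\le\bar\alpha$ by the cross-multiplication you indicate, and the final step $\bar\alpha-\eps\le(\bar\alpha-\alpha)/2$ via $\eps\ge p=\tfrac12=(\bar\alpha+\alpha)/2$ is valid. The only cosmetic remark is that the hypothesis $\alpha>0$ in the first part is not actually needed for your $B>0$ argument (if $\alpha=0$ then $r=0$ and the same limit applies), but it is harmless and matches the statement.
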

\begin{figure}
\centering
\begin{tikzpicture}[thick, scale=0.7]
\begin{axis}[axis on top, xticklabels={$\eps_{\mathsf{L}}$, $\bar{\alpha}$},ymin=0.7,ymax=1,
  samples=50, y=16.5cm, x label style={at={(axis description cs:0.5,0)},rotate=0,anchor=south},
    x=29.5cm, xlabel= $\eps$,
  no markers]
%\addplot[no marks,blue,line width=1pt,samples=600] {2*x-0.4};
\addplot+[domain=0.6:0.8][name path=A, no marks,blue,dotted, line width=1pt,samples=600] {1.4*x-0.12};
\addplot+[domain=0.6:0.8][name path=B, no marks,red, dashed, line width=1pt,samples=600] {sqrt(1.4*x^2+0.104)};
\addplot[domain=0.6:0.8][name path=B2, no marks,green,line width=1pt,samples=600] {4.67162*x^(10)+0.498388)^(0.1)};
%\addplot[blue!10] fill between[of=A and B];
%\addplot[domain=0.25:0.4][no marks,red,line width=1pt,samples=600] {2*x+0.75-5/7};
%\addplot[domain=0.4:0.6][no marks,red,line width=1pt,samples=600] {0.7*x+1.55-0.28-5/7};
%%\addplot[domain=0.6:0.8][no marks,red,line width=1pt,samples=600] {2.7*x+1.55-0.28-5/7};
%\addplot[domain=0.6:0.85][no marks,red,line width=1pt,samples=800] {0.07*x-0.042+0.42+1.55-0.28-5/7};
%\addplot[mark=none, green, dotted, line width=1pt] coordinates {(0.85,0) (0.85,1)};
\end{axis}
\node[below,blue] at (0.65, -0.1) {\footnotesize{$\eps_{\mathsf{L}}$}};
\node[below,blue] at (6.5, -0.1) {\footnotesize{$\bar{\alpha}$}};
\end{tikzpicture}
\caption{The graphs of $\underbar{\emph{h}}_{10}$ (solid curve), $\underbar{\emph{h}}_{2}$ (dashed curve), and $\emph{h}^{\mathsf{i}}$ (dotted line) given in Theorem~\ref{Thm:Difference} and Proposition~\ref{Propo_h_iid} for i.i.d.\ $(X^n, Y^n)$ with $X\sim\sBer(0.6)$ and $P_{Y|X}=\mathsf{BSC}(0.2)$.}\label{fig:1}
\end{figure}  
Since $\Phi(n)\downarrow 0$ as $n\to \infty$, \eqref{LB_Gap} implies that, as expected, the gap between the performance of the optimal privacy filter and the optimal memoryless privacy filter increases as $n$ increases. This observation is numerically illustrated in Fig.~\ref{fig:1}, where $\underbar{\emph{h}}_n(\eps)$ is plotted as a function of $\eps$ for $n=2$ and $n=10$.
%It can also be shown that for $\frac{1}{2}<p\leq 1$, we have $\Upsilon(n)\uparrow 1$ as $n\to \infty$ and hence as a loose universal bound, we have
%$\underbar{\emph{h}}_n(\eps)-h^\mathsf{i}_n(\eps)\leq 1-\Upsilon(1)$.
%It can also be shown that if $p=\frac{1}{2}$, then $\Upsilon(n)\uparrow \frac{1}{2\bar{\alpha}}$ and $\Upsilon(1)=\frac{1}{2}$; thus we can obtain the following bound for $\eps\in [\eps_\mathsf{L}, \bar{\alpha}]$:
%$$h^\mathsf{i}_n(\eps)\leq\underbar{\emph{h}}_n(\eps)\leq h^\mathsf{i}_n(\eps)+ \frac{\alpha}{2\bar{\alpha}}+\delta_n.$$
%$$\frac{1}{2}\leq\underbar{\emph{h}}_n(\eps)\leq \frac{1}{2\bar{\alpha}}+\delta_n,$$ where $\delta_n\to 0$ as $n\to \infty$.
Moreover, \eqref{UB_Gap} implies that when $p=\frac{1}{2}$ and $\alpha$ is small, then $\underbar{\emph{h}}_n(\eps)$ can be approximated by $h^\mathsf{i}_n(\eps)$. Thus, we can approximate the optimal filter $\mathsf{Z}_n(\zeta_n(\eps))$ with a simple memoryless filter given by $Z_k=Y_k\oplus W_k$, where $W_1, \dots, W_n$ are i.i.d.\ $\sBer(0.5\zeta(\eps))$ random variables that are independent of $(X^n, Y^n)$.

\bibliographystyle{IEEEtran}
\bibliography{bibliography}

\end{document}